\newtheorem{thm}{Theorem}[section]
\newtheorem{lma}{Lemma}[section]
\newtheorem{cor}{Corollary}[section]
\theoremstyle{definition}
\theoremstyle{remark}
\numberwithin{equation}{section}
\newcommand{\R}{\mathbb R}
\newcommand{\be}{\begin{equation}}
\newcommand{\ee}{\end{equation}}
\newcommand{\bee}{\begin{equation*}}
\newcommand{\eee}{\end{equation*}}
\def\p{\partial}
\def\la{\langle}
\def\ra{\rangle}
\def\lf{\left}
\def\ri{\right}
\def\Pi{\displaystyle{\mathbb{II}}}
\def\e{\epsilon}
\def\a{\alpha}
\def\wt{\widetilde }
\def\n{\scriptscriptstyle{\mathbf{N}}}
\begin{document}
\title[]
{Quasilocal energy in Kerr spacetime}

\author{Jian-Liang Liu}
\address[Jian-Liang Liu]{Department of Mathematics, Shantou University, Guangdong 515063, P.R.China.}
 \email{liujl\_mc@163.com}
%\thanks{$^1$I would like to thank Prof. James M. Nester for the helpful discussions.}

\author{Luen-Fai Tam{$^1$}}
\address[Luen-Fai Tam]{The Institute of Mathematical Sciences and Department of
 Mathematics, The Chinese University of Hong Kong, Shatin, Hong Kong, China.}
 \email{lftam@math.cuhk.edu.hk}
\thanks{$^1$Research partially supported by Hong Kong RGC General Research Fund \#CUHK 14305114}

\renewcommand{\subjclassname}{
  \textup{2010} Mathematics Subject Classification}
\subjclass[2010]{Primary 83C40; Secondary 53C20}

\date{February, 2016}

\begin{abstract} In this work we study the quasilocal energy as in \cite{SCLN:2013} for a constant radius surface in Kerr spacetime in Boyer-Lindquist coordinates. We show that under suitable conditions for isometric embedding, for a stationary observer the quasilocal energy defined in \cite{SCLN:2013}  for constant radius in a Kerr like spacetime is exactly equal to the Brown-York quasilocal energy \cite{Brown:1992br}. By some careful estimations, we show that for a constant radius surface in the Kerr spacetime which is outside the   ergosphere the embedding conditions for the previous result are satisfied. We discuss extremal solutions as described in \cite{JLTH}. We prove a uniqueness result. We find all extremal solutions in the Minkowski spacetime. Finally, we show that near the horizon of the Kerr spacetime for the small rotation case the extremal solutions are trivial.

\end{abstract}

\maketitle

\markboth{Jianliang Liu and Luen-Fai Tam}
{Quasilocal energy in Kerr spacetime}

%\renewcommand{\subjclassname}{
 % \textup{2010} Mathematics Subject Classification}
%\subjclass[2010]{Primary 83C99; Secondary 53C20}

\section{Introduction}

In this work, we want to discuss the quasi-local energy (QLE) as in \cite{SCLN:2013,SCLN:2015,SLN2015,JLTH} for some spacelike surfaces on a Kerr-like spacetime. Let us recall the formulation of such a
QLE.
From the covariant Hamiltonian formalism \cite{Chen:1998aw,Chen:2005hwa}, the conserved current is defined by the Hamiltonian 3-form $\mathcal{H}(\mathbf{N})$ (on the space-like hypersurface) under the infinitesimal diffeomorphism generated by $\mathbf{N}$. The conserved quantity is then the integration over the space-like region $\Omega$ which reduces to the boundary integration of the total derivative term in $\mathcal{H}(\mathbf{N})$ \emph{on shell}. The field equations are preserved under any modification of the total derivative term, which changes the boundary term and then changes the   value of the conserved quantity. Modifying the boundary term implies a different boundary condition (corresponding to a different pseudotensor expression) and the choice of reference (corresponding to the frame choice of the pseudotensor). For a specific boundary expression, the choice of reference is still arbitrary. The difficulty comes from choosing a reasonable reference. There are several different strategies for choosing the reference. We will follow the $4D$ isometric matching as in \cite{Wu:2011wk,Wu:2012mi,SCLN:2013,SCLN:2015,SLN2015}. The idea of the $4D$ isometric matching $g_{\mu\nu}\doteq\bar{g}_{\mu\nu}$  is that the metric $\bar{g}_{\mu\nu}$ of the background spacetime matches the physical metric $g_{\mu\nu}$ on the quasi-local 2-boundary. One can imagine that there is an observer in a specific spacetime who measures the conserved quantity of the physical world. The matching of the metric may be regarded as the calibration of the measurement on the 2-surface. Consequently, the conserved quantity is obtained by the specific displacement vector field $\mathbf{N}$, e.g.\ energy conservation for a time-like displacement; linear momentum for a space-like transition; angular momentum for a rotation and the center-of-mass moment for a boost displacement.

We will  focus on the quasi-local \emph{energy} of the Kerr spacetime. The background spacetime is chosen to be the Minkowski spacetime and $\mathbf{N}$ to be the time-like Killing vector of the background. The reference choice appears as the Jacobian of the background coordinate system, which also determines the displacement.

The $4D$ matching $g_{\mu\nu}\doteq\bar{g}_{\mu\nu}$ gives 10 constraints on the 12 independent unknowns of the Jacobian, and it reduces to two freedoms: one corresponds to the 2-surface isometric embedding (we will call it ``the embedding freedom'') and the other to the displacement, or in other words, the observer dependence (we will call it ``the boost freedom''). The 2-surface isometric embedding is unique up to one free function, which is called ``the admissible $\tau$'' in \cite{WaYa2008}. The other freedom is from the remaining 7 constraints of the 8 unknowns. The Hamiltonian boundary expression is then a functional of these two free choices. There is no unique value of energy because different observers have different measurements. One could find the critical value via the variation with respect to the free choices \cite{JLTH,SLN2015}.

Consider the physical spacetime $(M,\mathbf{g})$ so that the metric is a Kerr-like metric:

\be\label{e-Kerr-like-1}
 \mathbf{g}=Fdt^2+2Gdtd\varphi+Hd\varphi^2+R^2dr^2+\Sigma^2d\theta^2
\ee
where $F, G, H, R, \Sigma$ are functions of $r,\theta$ only. The background we choose is the Minkowski spacetime $(\bar{M},\bar{\mathbf{g}})$:
\be
\bar{\mathbf{g}}= -dT^2+dX^2+dY^2+dZ^2.
\ee
 We use the Chen-Nester-Tung (CNT) quasi-local expression in \cite[eq. (4)]{SCLN:2015}, which is
\be
\mathcal{B}(\mathbf{N})=\frac{1}{2\kappa}(\Delta\Gamma^{\alpha}{}_{\beta}\wedge\iota_{\n}\eta_{\alpha}{}^{\beta}+\bar{D}_{\beta}N^{\alpha}\Delta\eta_{\alpha}{}^{\beta}),
\ee
where $\kappa=8\pi$, $\iota_{\n}\eta_{\alpha}{}^{\beta}=\sqrt{-g}N^{\mu}g^{\beta\gamma}\epsilon_{\alpha\gamma\mu\nu}dx^{\nu}$ and $\Delta \alpha:=\alpha-\bar{\alpha}$ is the difference of the physical field and the reference one. Note that the second term includes $\Delta\eta_{\alpha}{}^{\beta}:=\frac{1}{2}(\sqrt{-g}g^{\beta\gamma}-\sqrt{-\bar{g}}\bar{g}^{\beta\gamma})\epsilon_{\alpha\gamma\mu\nu}dx^{\mu}\wedge dx^{\nu}$, which vanishes for the   $4D$ matching condition $g_{\mu\nu}\doteq\bar{g}_{\mu\nu}$. The quasi-local energy can then be simplified to
\be
E(\mathbf{N},\Omega)=\oint_{\partial\Omega}\mathcal{B}(\mathbf{N})=\oint_{\mathcal{S}}\frac{1}{2\kappa}\Delta\Gamma^{\alpha}{}_{\beta}\wedge\iota_{\n}\eta_{\alpha}{}^{\beta}.
\ee
In this work, we will consider the spacelike two surfaces $\mathcal{S}(r_0)$  in $M$ given by $r=r_0$, $t=t_0$ where $r_0, t_0$ are constants. We consider   embeddings in $\bar  {M}$ of the form:
\be\label{e-TXYZ}
T=T(t,r,\theta), X=\rho\cos(\varphi+\Phi),  Y=\rho\sin(\varphi+\Phi), Z=Z(t,r,\theta)
\ee
where $\rho$ and $\Phi$ are functions of $(t,r,\theta)$. As mentioned above, the $4D$ matching has two free choices, which will be chosen to be $x=T_r, y=T_\theta$. We choose $\mathbf{N}=\p_T$. The next step is to find the critical value by varying $x, y$. One then obtains (\cite{SCLN:2013} (50) and (51))
\be\label{e-xy-system-1}
\left\{
  \begin{array}{ll}
    y_\theta=  &\displaystyle{ -\frac{(\Sigma^2H)_r}{2HR^2}x-\frac{\Sigma H_\theta-2H\Sigma_\theta}{2H\Sigma}y} \\
    x_\theta= & \displaystyle{\frac{R_\theta}{R}x+\lf(\frac{(\Sigma^2H)_r}{2H\Sigma^2}-\frac{\a\beta+xy H_\theta}{2H\ell}\ri)y}.
  \end{array}
\right.
\ee
Substitute these equations into the QLE expression (\cite{SCLN:2013}, (48)), then it becomes (see Appendix)
\be\label{e-QLE-Kerrlike}
E(\mathbf{N},\Omega(r);x,y)=-\int_0^{2\pi}\lf(\int_0^\pi\lf(\frac{AH_\theta}{D}\frac{xy}{\beta \ell}+\frac AD\frac\a\ell+\frac CD\frac1\beta\ri)d\theta\ri)d\varphi
\ee
where
\be\label{e-ABC-1}
\left\{
  \begin{array}{ll}
    A=&\Sigma(H\Sigma^2)_r,\\
   %B=& \Sigma^2H_\theta,\\
C=&2HR^2\lf(H_{\theta\theta}\Sigma-H_\theta\Sigma_\theta-2\Sigma^3\ri)\\
D=&16\pi H^\frac12 R^2     \Sigma  \\
\a=&\lf(x^2\Sigma^2   + R^2(y^2 + \Sigma^2)\ri)^\frac12\\
\beta=&\lf(-H_{\theta}^2 + 4H(y^2 + \Sigma^2)\ri)^\frac12\\
\ell=&y^2+\Sigma^2.
  \end{array}
\right.
\ee
Here $\Omega(r)$ is the domain in the time slice $t=t_0$ with  boundary $\mathcal{S}(r)$ and $E$ depends on the choice of $x, y$.

It is obvious that \eqref{e-xy-system-1} has the trivial solution   $x\equiv0$, $y\equiv0$, which may be considered to correspond to a stationary observer. Our first result is:

{\it Suppose $\mathcal{S}(r)$ has positive Gaussian curvature and can  be isometrically embedded in $\R^3$ as a surface of revolution, then $E(\mathbf{N},\Omega(r);0,0)$ is exactly equal to the Brown-York QLE in \cite{Brown:1992br}}.

 It is easy to see that for the Kerr metric, in the slow rotation case, $\mathcal{S}(r)$ satisfies the above embedding condition. In \cite{SCLN:2013}, using the slow rotation approximation of Martinez \cite{Martinez}, it was proved the $E(\mathbf{N},\Omega(r);0,0)$ is equal to the Brown-York QLE in the slow rotation approximation. It was also proved that $E(\mathbf{N},\Omega(r);0,0)$ is the Brown-York QLE for the Schwarzschild metric. Our result says that in fact for the slow rotation case in the Kerr spacetime, they are {\it exactly} the same.

One question is whether this is still true for a general rotation. Direct computations show  that  in the extremal case for the Kerr metric, the horizon has negative Gaussian curvature somewhere. In fact,  it is known  \cite{Smarr} that the Kerr
horizon cannot be embedded globally in $\R^3$ whenever
$a > \sqrt3 m/2$. Here $a$ is the angular momentum per unit mass and $m$ is the mass. We always assume that $a\le m$.  Physically one would like to consider $\mathcal{S}(r)$ which is   outside the ergosphere. Our second result is:

{\it For the Kerr spacetime, if $r\geq2m$, then $\mathcal{S}(r)$ has positive Gaussian curvature and can be isometrically embedded in $\R^3$ as a surface of revolution.}

Hence  our first result applies to this situation. In particular, we obtain an explicit formula for the Brown-York quasi-local energy for $\mathcal{S}(r)$.

Our next result is to consider solutions of \eqref{e-xy-system-1} for Kerr spacetime. For Minkowski spacetime, i.e. $a=m=0$, then one can find nontrivial solutions to \eqref{e-xy-system-1}, see \cite{JLTH}. In fact, one can find all the solutions in the Minkowski spacetime. This follows from an uniqueness result. Namely, we prove:

{\it For the Kerr spacetime, two sets of solutions to \eqref{e-xy-system-1} with bounded derivatives are equal if they are equal at $\theta=0$ or at $\theta=\pi$. In particular, the Minkowski spacetime, the solutions are of the form $x=-k\cos\theta$, $y=kr\sin\theta$, where $k$ is a constant.}

  Hence for the Minkowski spacetime, $E(\mathbf{N},\Omega(r);x,y)=0$ for all solutions $x, y$ to \eqref{e-xy-system-1}.

On the other hand, if  there is a horizon, then we may only have trivial solutions. Let $r_+=m+(m^2-a^2)^\frac12$. We prove that:

{\it If $0\le a \ll m$ and if $r>r_+$ which is close enough to $r_+$, then any solution $x, y$ to \eqref{e-xy-system-1} with $|x_\theta|, |y_\theta|$ being bounded must be trivial, i.e.  $x\equiv0$, $y\equiv0$.}

The organization of the paper is as follows: In section \ref{s-QLE}, we will prove that under suitable conditions $E(\mathbf{N},\Omega(r);0,0)$ is equal to the Brown-York QLE. In section \ref{s-embedding}, we will discuss the embedding problem for surfaces $\mathcal{S}(r)$ in the Kerr spacetime. In section \ref{s-solution}, we will discuss the solutions to the system \eqref{e-xy-system-1}.

{\it Acknowledgement}: The  authors would like to thank Prof. James M. Nester for   helpful discussions. The first author would like to thank the visiting support by the School of Mathematical Sciences, Capital Normal University and Morningside Center of Mathematics, Chinese Academy of Sciences.

 \section{CNT QLE and Brown-York QLE}\label{s-QLE}

 Let $(M,\mathbf{g})$ be a Kerr like spacetime with $\mathbf{g}$ given as in \eqref{e-Kerr-like-1}, with $0\le \theta\le \pi$, $0\le\varphi<2\pi$.
  We assume that the slice $\hat M= \{t=t_0\}$ is spacelike, where $t_0$ is a constant. Then the induced metric on $\hat M$ is:
$$
\hat g=H^2d\varphi^2+R^2dr^2+\Sigma^2d\theta^2.
$$
Let $\mathcal{S}(r_0) $ be the surface $r=r_0$ in $\hat M$, where $r_0$ is a constant. The induced metric on $\mathcal{S}(r_0)$ is given by
$$
d\sigma^2=\Sigma^2d\theta^2+Hd\varphi^2.
$$
We assume this is a closed surface.
\begin{lma}\label{l-BY-1} Let $\kappa$ be the mean curvature of $\mathcal{S}(r_0)$ with respect to the unit normal $\nu=R^{-1}\p_r$. Then
\bee
\frac1{8\pi}\int_{\mathcal{S}(r_0)}\kappa d\sigma= \int_0^{2\pi}\lf(\int_0^\pi \frac{A\a}{D\ell}d\theta\ri)d\varphi
\eee
where $A, \a, D, \ell$ are as in \eqref{e-ABC-1} with $\a$ being evaluated at $x=0, y=0$, and $d\sigma$ is the area element on $\mathcal{S}(r_0)$.
\end{lma}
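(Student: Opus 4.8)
The plan is to compute both sides of the asserted identity explicitly in the coordinates $(r,\theta,\varphi)$ and to check that the two integrands agree pointwise once $x=y=0$ is substituted in \eqref{e-ABC-1}.

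For the left-hand side, I would use that $\mathcal{S}(r_0)$ is a regular level set of the coordinate function $r$ on the time slice, where the metric $\hat g = R^2\,dr^2 + \Sigma^2\,d\theta^2 + H\,d\varphi^2$ is diagonal, so $\nu = R^{-1}\partial_r$ is a unit normal. Since $\nu$ extends to a vector field of constant $\hat g$-length, its divergence on $(\hat M,\hat g)$ equals the trace of the second fundamental form of $\mathcal{S}(r_0)$, the normal--normal contribution $\langle\nabla_\nu\nu,\nu\rangle = \tfrac12\,\nu(|\nu|_{\hat g}^2)$ being zero; thus $\kappa = \mathrm{div}_{\hat g}\nu$. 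The coordinate divergence formula, with volume density $\sqrt{\det\hat g} = R\,\Sigma\,H^{1/2}$ and the only nonzero component $\nu^r = R^{-1}$, then gives
\[
\kappa \;=\; \frac{1}{R\,\Sigma\,H^{1/2}}\,\big(\Sigma H^{1/2}\big)_r .
\]
Since the area element on $\mathcal{S}(r_0)$ is $d\sigma = \Sigma H^{1/2}\,d\theta\,d\varphi$, the integrand on the left becomes $\tfrac{1}{8\pi}\,\kappa\,d\sigma = \tfrac{1}{8\pi}\,R^{-1}\big(\Sigma H^{1/2}\big)_r\,d\theta\,d\varphi$.

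For the right-hand side, I would evaluate the entries of \eqref{e-ABC-1} at $x=0$, $y=0$: there $\alpha = (R^2\Sigma^2)^{1/2} = R\Sigma$ and $\ell = \Sigma^2$, while $A = \Sigma(H\Sigma^2)_r$ and $D = 16\pi H^{1/2}R^2\Sigma$ are unchanged, so $\tfrac{A\alpha}{D\ell} = \tfrac{(H\Sigma^2)_r}{16\pi\,H^{1/2}R\,\Sigma}$. Matching the two integrands then reduces to the elementary identity $(H\Sigma^2)_r = 2\,\Sigma H^{1/2}\,\big(\Sigma H^{1/2}\big)_r$, which holds because both sides expand to $H_r\Sigma^2 + 2H\Sigma\Sigma_r$ by the product rule. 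Integrating the common integrand over $\theta\in[0,\pi]$ and $\varphi\in[0,2\pi)$ finishes the proof. There is no real obstacle here: the statement is an identity and the argument is a short computation. The only things to be careful about are fixing the mean-curvature convention so that $\kappa$ is the \emph{trace} (not the average) of the second fundamental form with respect to $\nu = R^{-1}\partial_r$ --- this is what produces the factor $8\pi$ rather than $16\pi$ and matches the Brown--York normalization --- and the standing positivity $R,H,\Sigma>0$, which is needed to extract the square roots and to identify $\alpha|_{x=y=0}$ with $+R\Sigma$.
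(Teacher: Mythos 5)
Your proof is correct and follows essentially the same route as the paper: compute $\kappa$ explicitly for the diagonal slice metric, evaluate $A\alpha/(D\ell)$ at $x=y=0$ (giving $\alpha=R\Sigma$, $\ell=\Sigma^2$), and match the two integrands against $d\sigma=\Sigma H^{1/2}\,d\theta\,d\varphi$. The only cosmetic difference is that you obtain $\kappa$ as the coordinate divergence of the unit normal, whereas the paper computes the components $\langle\nabla_{\partial_\varphi}\partial_\varphi,\nu\rangle$ and $\langle\nabla_{\partial_\theta}\partial_\theta,\nu\rangle$ directly; both yield $\kappa=(\Sigma H_r+2H\Sigma_r)/(2HR\Sigma)=A/(2HR\Sigma^3)$, which is equivalent to your $R^{-1}(\Sigma H^{1/2})_r/(\Sigma H^{1/2})$.
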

\begin{proof}
Note that $H, R, \Sigma$ are independent of $\varphi$.
Let $\nabla$ be the covariant derivative of the slice. Then the mean curvature is
\be\label{l-mean-1}
\begin{split}
\kappa=&-\lf(H^{-1}\la \nabla_{\p_\varphi}\p_\varphi,\nu\ra+\Sigma^{-2}\la \nabla_{\p_\theta}\p_\theta,\nu\ra\ri).
\end{split}
\ee
On the other hand,
\bee
\begin{split}
\la \nabla_{\p_\varphi}\p_\varphi,\nu\ra=&\frac1R\la \nabla_{\p_\varphi}\p_\varphi,\p_r\ra\\
=&-\frac1R\la\p_\varphi, \nabla_{\p_\varphi}\p_r\ra\\
=&-\frac1R\la\p_\varphi, \nabla_{\p_r} \p_\varphi \ra\\
=&-\frac1{2R}\p_r\la\p_\varphi,\p_\varphi\ra\\
=&-\frac{H_r}{2R}.
\end{split}
\eee
 Similarly,
 \bee
 \la \nabla_{\p_\theta}\p_\theta,\nu\ra=-\frac{(\Sigma^2)_r}{2R}.
 \eee
 Hence
 \be
 \begin{split}
\kappa=&\frac{H_r}{2HR}+\frac{(\Sigma^2)_r}{2\Sigma^2R}=\frac{\Sigma H_r+2H\Sigma_r}{2HR\Sigma}=\frac{A}{2HR\Sigma^3}.
\end{split}
\ee
Hence
\be
 \begin{split}
 \int_{\mathcal{S}(r_0)}\kappa d\sigma=&\int_0^{2\pi}\lf(\int_0^\pi\frac{A}{2HR\Sigma^3} H^\frac12  \Sigma d\theta\ri)d\varphi\\
 =& \int_0^{2\pi}\lf(\int_0^\pi\frac{A}{2H^\frac12R\Sigma^2}   d\theta\ri)d\varphi
 \end{split}
\ee
On the other hand, at $x=0, y=0$,
\bee
\begin{split}
\int_0^\pi \frac{A\a}{D\ell}d\theta=&\int_0^\pi \frac{A R\Sigma}
{16\pi H^\frac12  R^2\Sigma\cdot \Sigma^2}d\theta\\
=&\int_0^\pi \frac{A  }
{16\pi R H^\frac12     \Sigma^2}d\theta\\
\end{split}
\eee
From the above two relations, the result follows.
\end{proof}

By direct computations, we have:
\begin{lma}\label{l-Gauss-1} The Gaussian curvature $K$ of $\mathcal{S}(r_0)$ is given by
\bee
K=\frac14\cdot \frac{- H \lf(2H_{\theta\theta}\Sigma^2- H_\theta(\Sigma^2)_\theta-4\Sigma^4\ri)-(4H\Sigma^2-H_\theta^2)\Sigma^2}{\Sigma^4H^2}.
\eee

\end{lma}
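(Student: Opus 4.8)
The plan is to compute the Gaussian curvature $K$ of $\mathcal{S}(r_0)$ directly from the induced metric $d\sigma^2 = \Sigma^2\,d\theta^2 + H\,d\varphi^2$, where $\Sigma$ and $H$ depend only on $\theta$ (recall $r=r_0$ is fixed, so $\Sigma_r,H_r$ play no role here). This is an orthogonal (warped-product-type) metric in the two coordinates $(\theta,\varphi)$, and for such a metric there is a standard closed-form expression for the Gaussian curvature. The cleanest route is to write the metric as $E\,d\theta^2 + G\,d\varphi^2$ with $E = \Sigma^2$, $G = H$, both independent of $\varphi$, and use the Brioschi/Gauss formula for a diagonal metric,
\be
K = -\frac{1}{2\sqrt{EG}}\left(\frac{\p}{\p\theta}\frac{G_\theta}{\sqrt{EG}}\right),
\ee
since only $\theta$-derivatives survive. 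Substituting $E=\Sigma^2$, $G=H$ and simplifying should yield the stated expression after collecting terms over the common denominator $\Sigma^4 H^2$.

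First I would expand $\sqrt{EG} = \Sigma H^{1/2}$ and compute $G_\theta/\sqrt{EG} = H_\theta/(\Sigma H^{1/2})$. Then I would differentiate this in $\theta$ using the quotient/product rule, being careful with the $H^{1/2}$ and $\Sigma$ factors, producing terms in $H_{\theta\theta}$, $H_\theta^2$, $H_\theta\Sigma_\theta$. Multiplying by $-1/(2\Sigma H^{1/2})$ and clearing denominators gives a rational expression; I would then check that it matches the claimed numerator $-H(2H_{\theta\theta}\Sigma^2 - H_\theta(\Sigma^2)_\theta - 4\Sigma^4) - (4H\Sigma^2 - H_\theta^2)\Sigma^2$ over $\Sigma^4 H^2$. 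A useful consistency check is to note that $(\Sigma^2)_\theta = 2\Sigma\Sigma_\theta$, so terms like $H_\theta(\Sigma^2)_\theta$ and $H_\theta\Sigma\Sigma_\theta$ are interchangeable up to a factor of $2$, which helps reconcile intermediate forms with the final one; another check is to verify the formula reduces correctly in the round-sphere case $\Sigma=1$, $H=\sin^2\theta$, where it must give $K=1$.

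This lemma is labeled "by direct computations," so there is no genuine conceptual obstacle; the only real work is bookkeeping. The main place to be careful is the differentiation of the composite $H_\theta \Sigma^{-1} H^{-1/2}$ and the subsequent clearing of the half-integer power of $H$ — sign errors and mismatched powers of $\Sigma$ and $H$ are the likely pitfalls, so I would organize the computation by writing everything over $\Sigma^2 H^2$ at an intermediate stage before reducing, and cross-check the coefficient of each monomial ($H_{\theta\theta}$, $H_\theta^2$, $H_\theta\Sigma_\theta$, and the curvature-of-$\Sigma$-free term $-4\Sigma^4 H$ and $-4H\Sigma^4$) against the target expression term by term.
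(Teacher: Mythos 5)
Your proposal is correct and is exactly the "direct computation" the paper has in mind (the paper gives no written proof): applying the Gauss formula $K=-\frac{1}{2\sqrt{EG}}\partial_\theta\bigl(G_\theta/\sqrt{EG}\bigr)$ to $E=\Sigma^2$, $G=H$ yields $K=-\frac{H_{\theta\theta}}{2\Sigma^2H}+\frac{H_\theta\Sigma_\theta}{2\Sigma^3H}+\frac{H_\theta^2}{4\Sigma^2H^2}$, which agrees term by term with the stated expression once the numerator is expanded (the $\pm4H\Sigma^4$ terms cancel), and your round-sphere sanity check also passes.
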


We want to isometrically embed the surface $(\mathcal{S}(r_0), d\sigma^2)$ in $\R^3$.  Consider the plane curve in the $xz$-plane, $(\eta(\theta),0,\xi(\theta))$, with
\be\label{e-embed-1}
\eta(\theta)=H^\frac12
\ee
\be\label{e-embed-2}
\xi(\theta)=\int_{\frac\pi2}^\theta\lf(\frac{ -H_\theta^2+4H\Sigma^2 }{4H}\ri)^\frac12d\theta.
\ee
Again we are fixing $t=t_0$, $r=r_0$.  Assume that $\eta(0)=\eta(\pi)=0$ and
assume that when we rotate the curve about $z$-axis we have a smooth embedded closed surface in $\R^3$. Then the surface is given by
\be\label{e-embed-3}
X(\theta,\varphi)=(x(\theta,\varphi),y(\theta,\varphi),z(\theta,\varphi))
=(\eta(\theta)\cos\varphi,\eta(\theta)\sin\varphi,\xi(\theta)).
\ee
It is easy to see that

$$
\la X_\theta,X_\theta\ra= \Sigma^2, \la X_\theta,X_\varphi\ra=0, \la X_\varphi,X_\varphi\ra=H.
$$
Hence $X$ is an isometric embedding of $(\mathcal{S}(r_0), d\sigma^2)$ in $\R^3$.

\begin{lma}\label{l-BY-2} Under the above assumptions and notations,   we have

\bee
\frac1{8\pi}\int_{\mathcal{S}(r_0)}\kappa_0 d\sigma=-\int_0^{2\pi}\lf(\int_0^\pi \frac CD\frac1\beta d\theta\ri)d\varphi
\eee
where $\kappa_0$ is the mean curvature with respect to the unit outward normal  of $\mathcal{S}(r_0)$ when it is isometrically embedded in $\R^3$, and $\beta$ is evaluated at $y=0$.
\end{lma}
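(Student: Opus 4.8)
The plan is to compute the mean curvature $\kappa_0$ of the surface of revolution \eqref{e-embed-3} directly, insert it into $\int_{\mathcal{S}(r_0)}\kappa_0\,d\sigma$, and simplify the resulting integrand until it agrees \emph{pointwise} with $-8\pi\,\frac CD\frac1\beta$; since only a trivial $\varphi$--integration is involved on either side, no integration by parts will be needed. \textbf{Step 1: mean curvature of the surface of revolution.} Write ${}'=\p_\theta$. From $X_\theta=(\eta'\cos\varphi,\eta'\sin\varphi,\xi')$, $X_\varphi=(-\eta\sin\varphi,\eta\cos\varphi,0)$ and the identity $\eta'^2+\xi'^2=\la X_\theta,X_\theta\ra=\Sigma^2$ already recorded above, $\nu_0:=\Sigma^{-1}(\xi'\cos\varphi,\xi'\sin\varphi,-\eta')$ is a unit normal; it is the \emph{outward} one, since at an interior maximum of $\eta$ (which exists because $\eta(0)=\eta(\pi)=0$ and $\eta>0$ on $(0,\pi)$ for the embedding to be a closed surface) it reduces to $\Sigma^{-1}(\xi'\cos\varphi,\xi'\sin\varphi,0)$, pointing away from the axis of revolution, and $\nu_0$ is continuous and nonvanishing. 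Repeating the computation of Lemma \ref{l-BY-1} with the flat connection of $\R^3$ (so that $\nabla_{\p_\varphi}\p_\varphi,\nabla_{\p_\theta}\p_\theta$ are the ordinary second derivatives $X_{\varphi\varphi},X_{\theta\theta}$) yields
\[
\kappa_0=-\lf(H^{-1}\la X_{\varphi\varphi},\nu_0\ra+\Sigma^{-2}\la X_{\theta\theta},\nu_0\ra\ri)=\frac{\eta\xi'}{H\Sigma}+\frac{\eta'\xi''-\eta''\xi'}{\Sigma^{3}}.
\]
(As a check this gives $\kappa_0=2/\rho$ for a round sphere of radius $\rho$.)

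\textbf{Step 2: insert the profile curve.} By \eqref{e-embed-1}--\eqref{e-embed-2}, $\eta=H^{1/2}$ and $\xi'=\beta/(2H^{1/2})$, where $\beta=(4H\Sigma^2-H_\theta^2)^{1/2}$ is exactly the quantity $\beta$ of \eqref{e-ABC-1} at $y=0$. Differentiating $\eta$ and $\xi'$ and using both $\beta^2=4H\Sigma^2-H_\theta^2$ and $\beta\beta'=2H_\theta\Sigma^2+4H\Sigma\Sigma_\theta-H_\theta H_{\theta\theta}$, the Wronskian-type term collapses to
\[
\eta'\xi''-\eta''\xi'=\frac{\Sigma\big(H_\theta^2\Sigma+2HH_\theta\Sigma_\theta-2HH_{\theta\theta}\Sigma\big)}{2H\beta}.
\]
Therefore, since $d\sigma=H^{1/2}\Sigma\,d\theta\,d\varphi$,
\[
\kappa_0\,H^{1/2}\Sigma=\frac{\beta}{2H^{1/2}}+\frac{H_\theta^2\Sigma+2HH_\theta\Sigma_\theta-2HH_{\theta\theta}\Sigma}{2H^{1/2}\Sigma\beta}=-\frac{H^{1/2}\big(H_{\theta\theta}\Sigma-H_\theta\Sigma_\theta-2\Sigma^3\big)}{\Sigma\beta},
\]
the last equality obtained by putting the two summands over the common denominator $2H^{1/2}\Sigma\beta$ and using $\beta^2=4H\Sigma^2-H_\theta^2$ once more.

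\textbf{Step 3: compare and integrate.} From \eqref{e-ABC-1}, $C=2HR^2(H_{\theta\theta}\Sigma-H_\theta\Sigma_\theta-2\Sigma^3)$ and $D=16\pi H^{1/2}R^2\Sigma$, hence $\frac CD\frac1\beta=\dfrac{H^{1/2}(H_{\theta\theta}\Sigma-H_\theta\Sigma_\theta-2\Sigma^3)}{8\pi\Sigma\beta}$, so the density found in Step 2 is precisely $-8\pi\,\frac CD\frac1\beta$. Integrating over $\theta\in[0,\pi]$ and $\varphi\in[0,2\pi]$ and dividing by $8\pi$ gives
\[
\frac1{8\pi}\int_{\mathcal{S}(r_0)}\kappa_0\,d\sigma=-\int_0^{2\pi}\lf(\int_0^\pi\frac CD\frac1\beta\,d\theta\ri)d\varphi,
\]
as claimed.

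The only real work is the algebra in Step 2: it is the constraint $\eta'^2+\xi'^2=\Sigma^2$ (equivalently $\beta^2=4H\Sigma^2-H_\theta^2$) that makes the $H_{\theta\theta}$ contributions coming from $\eta''$ and from $\beta'$ cancel down to the single $H_{\theta\theta}$ term present in $C$; one must also be careful to use the genuine outward normal $\nu_0$, since the wrong choice flips the overall sign.
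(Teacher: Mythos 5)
Your computation is correct, and it reaches the paper's conclusion by a slightly different route to the mean curvature. The paper also identifies the integrand pointwise with $-8\pi\,\frac{C}{D}\frac{1}{\beta}$, but it computes $\kappa_0$ as $\lambda_1+\lambda_2$ with only the parallel principal curvature $\lambda_1=\frac{\xi'}{\eta\Sigma}=\frac{\beta}{2H\Sigma}$ obtained from the embedding; the meridian curvature is then recovered intrinsically as $\lambda_2=K/\lambda_1$ using the Gaussian curvature formula of Lemma \ref{l-Gauss-1}, which keeps all second derivatives of $H$ inside the already-computed $K$ and avoids differentiating $\xi'$. You instead compute the full trace of the second fundamental form directly from the parametrization, which is self-contained (it does not invoke Lemma \ref{l-Gauss-1}) at the cost of the extra algebra with $\beta'$; your identity $\beta\beta'=2H_\theta\Sigma^2+4H\Sigma\Sigma_\theta-H_\theta H_{\theta\theta}$ and the resulting cancellation of the $H_\theta^2\Sigma$ terms check out, and your final density $-\frac{H^{1/2}(H_{\theta\theta}\Sigma-H_\theta\Sigma_\theta-2\Sigma^3)}{\Sigma\beta}$ agrees exactly with the paper's $\kappa_0 H^{1/2}\Sigma$. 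Your verification that $\nu_0$ is the outward normal (via $\xi'>0$ from \eqref{e-embed-2}) is a point the paper leaves implicit, so that is a small bonus of your presentation.
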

\begin{proof} Let $s$ be the arc length of the curve $(\eta,0,\xi)$. Since the embedded surface is a surface of revolution, one of the eigenvalues $\lambda_1$ of the second fundamental form is \cite[section 3.3]{do Carmo}:
\bee
\begin{split}
\lambda_1=& \frac{d\xi}{ds}\frac1{\eta}\\
=&\frac{\xi'}{H^\frac12}(\frac{ds}{d\theta})^{-1}\\
=&\frac{1}{2H\Sigma}\lf(-H_\theta^2+4H\Sigma^2 \ri)^\frac12.
\end{split}
\eee

Using the expression for the Gaussian curvature $K$ in Lemma \ref{l-Gauss-1},   the mean curvature in $\R^3$ is
\bee
\begin{split}
\kappa_0=&\lambda_1+\lambda_2\\
=&\lambda_1+\frac K{\lambda_1}\\
=&\frac1{\lambda_1}(\lambda_1^2+K)\\
=&\frac1{\lambda_1}\lf[ \frac{ -H_\theta^2+4H\Sigma^2 }{4H^2\Sigma^2} +\frac14\cdot \frac{- H \lf(2H_{\theta\theta}\Sigma^2- H_\theta(\Sigma^2)_\theta-4\Sigma^4\ri)-(4H\Sigma^2-H_\theta^2)\Sigma^2}{\Sigma^4H^2}\ri]\\
=&\frac{-  H \lf(2H_{\theta\theta}\Sigma^2- H_\theta(\Sigma^2)_\theta-4\Sigma^4\ri)
 }{4\cdot \frac{1}{2H\Sigma}\lf(-H_\theta^2+4H\Sigma^2 \ri)^\frac12 H^2 \Sigma^4}\\
 =&\frac{-   \lf(2H_{\theta\theta}\Sigma^2- H_\theta(\Sigma^2)_\theta-4\Sigma^4\ri)
 }{2\Sigma^3\lf(  -H_\theta^2+4H\Sigma^2  \ri)^\frac12  }
\end{split}
\eee
Hence
\bee
\begin{split}
\int_{\mathcal{S}(r_0}\kappa_0d\sigma=&\int_0^{2\pi}\lf(\int_0^\pi \frac{-   \lf(2H_{\theta\theta}\Sigma^2- H_\theta(\Sigma^2)_\theta-4\Sigma^4\ri)
 }{2\Sigma^3\lf(  -H_\theta^2+4H\Sigma^2  \ri)^\frac12  }H^\frac12\Sigma d\theta\ri)d\varphi\\
 =&\int_0^{2\pi}\lf( \int_0^\pi \frac{- H^\frac12  \lf(2H_{\theta\theta}\Sigma^2- H_\theta(\Sigma^2)_\theta-4\Sigma^4\ri)
 }{2\Sigma^2\lf(  -H_\theta^2+4H\Sigma^2  \ri)^\frac12  }  d\theta\ri)d\varphi
 \end{split}
\eee

At $y=0$,
\bee
\frac CD\frac1\beta=\frac{H^\frac12(2H_{\theta\theta}\Sigma^2-H_\theta(\Sigma^2)_\theta-4\Sigma^4)}{16\pi \Sigma^2(-H_\theta^2+4H\Sigma^2)^\frac12}.
\eee
From these two relations, we conclude that the lemma is true.
\end{proof}
\begin{thm}\label{t-BY} For the Kerr like metric \eqref{e-Kerr-like-1}, suppose  the embedding \eqref{e-embed-1} is defined for the surface  $S$ with $r=$constant, $t=$constant, we have
$$
E(\mathbf{N}, \Omega(r);0,0)=\mathfrak{m}_{\mathrm{BY}}(\mathcal{S}(r))
$$
where $\mathfrak{m}_{BY}(\mathcal{S}(r))$ is the Brown-York mass (with respect to the slice $t$=constant) of $\mathcal{S}(r)$.

\end{thm}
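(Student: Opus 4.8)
The plan is to assemble the theorem from the two computational lemmas already established, together with the formula \eqref{e-QLE-Kerrlike} for the quasilocal energy evaluated at the stationary observer $x=0$, $y=0$. Recall that the Brown-York mass of $\mathcal{S}(r)$ with respect to the slice $t=\text{const}$ is
\[
\mathfrak{m}_{\mathrm{BY}}(\mathcal{S}(r))=\frac{1}{8\pi}\int_{\mathcal{S}(r)}\kappa_0\, d\sigma-\frac{1}{8\pi}\int_{\mathcal{S}(r)}\kappa\, d\sigma,
\]
where $\kappa$ is the mean curvature of $\mathcal{S}(r)$ in the physical slice $\hat M$ (with respect to $\nu=R^{-1}\partial_r$) and $\kappa_0$ is the mean curvature of the isometric embedding of $(\mathcal{S}(r),d\sigma^2)$ into $\R^3$. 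The hypothesis that the embedding \eqref{e-embed-1}--\eqref{e-embed-3} is defined — i.e. $\eta(0)=\eta(\pi)=0$ and rotating the profile curve yields a smooth closed embedded surface of revolution — is exactly what makes $\kappa_0$ well-defined and makes Lemmas \ref{l-BY-1} and \ref{l-BY-2} applicable.

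First I would set $x=0$, $y=0$ in \eqref{e-QLE-Kerrlike}. At $y=0$ one has $\ell=\Sigma^2$, and the cross term $\frac{AH_\theta}{D}\frac{xy}{\beta\ell}$ vanishes because of the factor $x$. Hence
\[
E(\mathbf{N},\Omega(r);0,0)=-\int_0^{2\pi}\!\left(\int_0^\pi\left(\frac{A}{D}\frac{\alpha}{\ell}+\frac{C}{D}\frac{1}{\beta}\right)d\theta\right)d\varphi,
\]
with $\alpha,\beta$ evaluated at $x=y=0$. Next I would invoke Lemma \ref{l-BY-1}, which identifies $\int_0^{2\pi}\!\int_0^\pi \frac{A\alpha}{D\ell}\,d\theta\,d\varphi$ with $\frac{1}{8\pi}\int_{\mathcal{S}(r)}\kappa\, d\sigma$, and Lemma \ref{l-BY-2}, which identifies $-\int_0^{2\pi}\!\int_0^\pi \frac{C}{D}\frac{1}{\beta}\,d\theta\,d\varphi$ with $\frac{1}{8\pi}\int_{\mathcal{S}(r)}\kappa_0\, d\sigma$. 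Substituting,
\[
E(\mathbf{N},\Omega(r);0,0)=\frac{1}{8\pi}\int_{\mathcal{S}(r)}\kappa_0\, d\sigma-\frac{1}{8\pi}\int_{\mathcal{S}(r)}\kappa\, d\sigma=\mathfrak{m}_{\mathrm{BY}}(\mathcal{S}(r)),
\]
which is the assertion.

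There is essentially no analytic obstacle here: the content is all in the two lemmas and in the derivation of \eqref{e-QLE-Kerrlike} from \cite{SCLN:2013}, both already in hand. The one point that deserves care — and which I regard as the only real step — is bookkeeping of conventions: one must check that the sign conventions for the mean curvatures $\kappa$ (outward normal $\nu=R^{-1}\partial_r$ in $\hat M$) and $\kappa_0$ (outward normal of the surface of revolution in $\R^3$) match the convention in the Brown-York definition, so that the difference $\kappa_0-\kappa$ appears with the correct sign and $E(\mathbf{N},\Omega(r);0,0)$ comes out equal to, rather than the negative of, $\mathfrak{m}_{\mathrm{BY}}$. Once the orientations in Lemmas \ref{l-BY-1} and \ref{l-BY-2} are seen to be consistent with the Brown-York normalization, the theorem follows immediately.
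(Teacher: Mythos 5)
Your proposal is correct and is precisely the argument the paper intends: Theorem \ref{t-BY} is stated immediately after Lemmas \ref{l-BY-1} and \ref{l-BY-2} with no separate proof, because setting $x=y=0$ in \eqref{e-QLE-Kerrlike} (which kills the cross term and gives $\ell=\Sigma^2$, $\alpha=R\Sigma$) and substituting the two lemmas yields $E(\mathbf{N},\Omega(r);0,0)=\frac{1}{8\pi}\int_{\mathcal{S}(r)}(\kappa_0-\kappa)\,d\sigma=\mathfrak{m}_{\mathrm{BY}}(\mathcal{S}(r))$ exactly as you wrote. Your attention to the sign conventions is the right place to be careful, and they do check out against the lemmas' statements.
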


\section{isometric embedding for surfaces in the Kerr spacetime}\label{s-embedding}

In this section, we will concentrate on the Kerr metric. The Kerr like metric \eqref{e-Kerr-like-1} now becomes:
\be\label{e-Kerr-1}
\left\{
  \begin{array}{ll}
    F=&{\displaystyle-\frac{\Delta-a^2\sin^2\theta}{\Sigma^2}},\\
   G=& {\displaystyle-\frac{4mar\sin^2\theta}{\Sigma^2}} \\
H=&{\displaystyle\frac{\sin^2\theta\lf((r^2+a^2)^2-\Delta a^2\sin^2\theta\ri)}{\Sigma^2}},\\
R^2=&{\displaystyle\frac{\Sigma^2}{\Delta}},\\
\Sigma^2=&r^2+a^2\cos^2\theta\\
\Delta=&  r^2-2mr+a^2.
  \end{array}
\right.
\ee
Our aim is to discuss where the surface $r=$cosntant, $t=$constant has positive Gaussian curvature and can be isometrically embedded in $\R^3$ as a surface of revolution. We always fix $t=t_0$ and denote the surface $r=$constant on the slice $t=t_0$ by $\mathcal{S}(r)$ as before.

In the rest of this section, we assume that $m=1$, $0\le a\le m$ and $r\ge r_+$ which is the largest root of $\triangle=0.$ We also use the following notations:

\be\label{e-notations}
\left\{
  \begin{array}{ll}
    s:=&\sin\theta; \\
    c:=&\cos\theta;\\
\e:=&\displaystyle{\frac ar};\\
p:=&\displaystyle{\frac{2\e^2s^2}{r(1+\e^2 c^2)}}\\
 q:=&\displaystyle{\frac{ 1+\e^2 }{(1+\e^2 c^2)}}.
  \end{array}
\right.
\ee

 Then
\be\label{e-shorthand}
\left\{
  \begin{array}{ll}
    \Sigma^2=&r^2(1+\e^2c^2) \\
\triangle=&r^2\displaystyle{(1+\e^2-\frac2r)}\\
    H=  &  r^2s^2(1+\e^2 +p) \\
  \end{array}
\right.
\ee

\begin{lma}\label{l-dH-ddH} We have
 \begin{enumerate}
\item [(i)]
$
(\Sigma^2)_\theta=-2r^2\e^2cs;
$
%\item [(ii)] $(H\Sigma^2)_r
%= \displaystyle{2r^3s^2\lf(2(1+\e^2)-\e^2s^2(1-\frac1r)\ri)};
%$
\item [(ii)] $
\displaystyle{\frac1{2r^2}H_\theta= cs\lf[(1+\e^2)+p(1+q)\ri]};
$
and
\item [(iii)]
\bee
\frac1{2r^2}H_{\theta\theta}=(c^2-s^2)\lf[(1+\e^2)+p(1+q)\ri]+2c^2\lf[  pq(1+q)+\frac{s^2\e^2}{(1+\e^2)}pq^2\ri].
\eee
 \end{enumerate}
\end{lma}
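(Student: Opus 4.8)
\textbf{Proof strategy for Lemma \ref{l-dH-ddH}.} The plan is to differentiate the closed forms in \eqref{e-shorthand} directly in $\theta$, keeping $r$ (and hence $\e=a/r$) fixed. Introduce the abbreviation $u:=1+\e^2c^2$, so that $\Sigma^2=r^2u$, $H=r^2s^2(1+\e^2+p)$, $p=2\e^2s^2/(ru)$, $q=(1+\e^2)/u$, and note $u_\theta=-2\e^2cs$. Part (i) is then immediate, since $(\Sigma^2)_\theta=r^2u_\theta=-2r^2\e^2cs$. Before doing (ii) and (iii) I would record the three elementary identities
\bee
u+\e^2s^2=1+\e^2,\qquad 2u+\e^2s^2=(1+\e^2)+u,\qquad rp=\frac{2\e^2s^2}{u}=\frac{2s^2\e^2}{1+\e^2}\,q,
\eee
each an immediate consequence of the definitions; these are exactly what make the derivatives collapse to the stated shapes.

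For (ii), write $H/r^2=s^2(1+\e^2)+s^2p$ with $s^2p=2\e^2s^4/(ru)$. Dividing the $\theta$-derivative of the first term by $2$ gives $sc(1+\e^2)$; the quotient rule applied to the second, together with $u_\theta=-2\e^2cs$, yields $\tfrac12(s^2p)_\theta=2\e^2s^3c\,(2u+\e^2s^2)/(ru^2)$, and the second identity above together with the definitions of $p$ and $q$ rewrite this as $sc\,p(1+q)$. Adding the two pieces gives $\frac1{2r^2}H_\theta=sc\big[(1+\e^2)+p(1+q)\big]$.

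For (iii), differentiate (ii): putting $G:=(1+\e^2)+p(1+q)$ one has $\frac1{2r^2}H_{\theta\theta}=(c^2-s^2)G+sc\,G_\theta$, so it remains to compute $G_\theta=p_\theta(1+q)+p\,q_\theta$. Here the quotient rule and the first identity above give the compact forms $p_\theta=\tfrac{2c}{s}pq$ and $q_\theta=\tfrac{rc}{s}pq$; hence $G_\theta=\tfrac{2c}{s}pq(1+q)+\tfrac{rc}{s}p^2q$, and the last identity $rp=\tfrac{2s^2\e^2}{1+\e^2}q$ turns the second term into $\tfrac{2c}{s}\cdot\tfrac{s^2\e^2}{1+\e^2}pq^2$. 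Multiplying $G_\theta$ by $sc$ cancels the factor $1/s$ and produces exactly $2c^2\big[pq(1+q)+\tfrac{s^2\e^2}{1+\e^2}pq^2\big]$, which is the asserted formula.

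I do not expect any genuine obstacle: the content is purely computational, and the only thing to watch is bookkeeping discipline — at each step re-expressing intermediate quantities through $p$, $q$ and $1+\e^2$ and never leaving stray factors of $r$ or $u=1+\e^2c^2$, so that the right-hand sides in the statement emerge. One could alternatively differentiate the raw expressions in \eqref{e-Kerr-1}, but that route is much messier and the shorthand \eqref{e-shorthand} is plainly the efficient one.
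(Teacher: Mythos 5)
Your proposal is correct and follows essentially the same route as the paper: both differentiate the shorthand forms of $\Sigma^2$ and $H$ directly and reduce everything via the relations $p_\theta=\tfrac{2c}{s}pq$ and $q_\theta=\tfrac{2cs\e^2}{1+\e^2}q^2$ (your $q_\theta=\tfrac{rc}{s}pq$ is the same identity in disguise). The intermediate bookkeeping identities you record all check out, so there is nothing to add.
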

\begin{proof}
Note that
\bee
\begin{split}
p_\theta=&\frac{2\e^2}{r}\lf(\frac{2cs}{1+\e^2c^2}+\frac{\e^2s^2\cdot2cs}{(1+\e^2c^2)^2}\ri)\\
%=&\frac{4cs\e^2}{r}\cdot\frac{1+\e^2c^2+\e^2s^2}{(1+\e^2c^2)^2}\\
=&\frac{4cs\e^2}{r}\cdot\frac{1+\e^2}{(1+\e^2c^2)^2}.
\end{split}
\eee
So
\bee
s^2p_\theta=2cspq;\ \ q_\theta=\frac{2cs\e^2}{(1+\e^2)}q^2.
\eee
Hence
\bee
\begin{split}
\frac1{r^2}H_\theta=&2cs\lf(1+\e^2+p\ri)+s^2p_\theta\\
=&2 cs\lf[(1+\e^2)+p(1+q)\ri].
\end{split}
\eee

Hence
\bee
\begin{split}
\frac1{2r^2}H_{\theta\theta}=&(c^2-s^2)\lf[(1+\e^2)+p(1+q)\ri]+cs\lf[p_\theta(1+q)+pq_\theta\ri]\\
%=&(c^2-s^2)\lf[(1+\e^2)+p(1+q)\ri]+cs\lf[\frac{2c}spq(1+q)+\frac{2cs\e^2}{(1+\e^2)}pq^2\ri]\\
=&(c^2-s^2)\lf[(1+\e^2)+p(1+q)\ri]+2c^2\lf[  pq(1+q)+\frac{s^2\e^2}{(1+\e^2)}pq^2\ri].
\end{split}
\eee
\end{proof}

From \eqref{e-embed-2}, it is easy to see that a necessary condition so that the spacelike surface $r=$constant, $t=$constant can be isometrically embedded in $\R^3$ is: $4H\Sigma^2-H_\theta^2>0$. To estimate this expression, we have the following: (In the rest of the paper, $E$ always denotes a quantity which is bounded in absolute value by a constant which is independent of $r, a, \theta$ provided $r\ge 1$. Its meaning may vary from line to line).
\begin{lma}\label{l-beta}
$$
4H\Sigma^2-H_\theta^2=4r^4s^2\bigg\{(1+\e^2)s^2+\frac{2\e^2 s^2}r-2c^2p(1+\e^2)(1+q)-c^2p^2(1+q)^2\bigg\}.
$$
Moreover,
\begin{enumerate}
  \item [(i)] If $r\ge 2$, then
$$
4H\Sigma^2-H_\theta^2\ge  r^4s^4\frac{15}{64}.
$$
  \item [(ii)] If $r\ge 1$, then
$$
4H\Sigma^2-H_\theta^2=4r^4s^4(1+E\e^2).
$$
\end{enumerate}

\end{lma}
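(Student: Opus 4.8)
The plan is to prove the displayed identity by a direct substitution and then to read (i) and (ii) off from it.

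\emph{The identity.} Substituting $H=r^2s^2(1+\e^2+p)$ and $\Sigma^2=r^2(1+\e^2c^2)$ from \eqref{e-shorthand}, together with $H_\theta=2r^2cs\big((1+\e^2)+p(1+q)\big)$ from Lemma \ref{l-dH-ddH}(ii), into $4H\Sigma^2-H_\theta^2$ and factoring out $4r^4s^2$ leaves $(1+\e^2+p)(1+\e^2c^2)-c^2\big((1+\e^2)+p(1+q)\big)^2$. Two elementary facts reduce this to the claimed bracket: $p(1+\e^2c^2)=\frac{2\e^2s^2}{r}$, immediate from the definition of $p$ in \eqref{e-notations}, and $1+\e^2c^2-c^2(1+\e^2)=1-c^2=s^2$, which gives $(1+\e^2)(1+\e^2c^2)-c^2(1+\e^2)^2=(1+\e^2)s^2$. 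Expanding the square and collecting with these two substitutions produces exactly $(1+\e^2)s^2+\frac{2\e^2s^2}{r}-2c^2p(1+\e^2)(1+q)-c^2p^2(1+q)^2$.

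\emph{Part (ii).} For $r\ge1$ and $a\le m=1$ we have $\e=a/r\le1$ and $1+\e^2c^2\ge1$, so $p=\frac{2\e^2s^2}{r(1+\e^2c^2)}=E\e^2s^2$ and $1\le q=\frac{1+\e^2}{1+\e^2c^2}\le2$, in the $E$-convention fixed just above. Then $(1+\e^2)s^2=s^2(1+E\e^2)$, the terms $\frac{2\e^2s^2}{r}$ and $2c^2p(1+\e^2)(1+q)$ are each $E\e^2s^2$, and $c^2p^2(1+q)^2=E\e^4s^4=E\e^2s^2$ since $\e^2s^2\le1$; hence the bracket equals $s^2(1+E\e^2)$, and multiplying by $4r^4s^2$ gives $4H\Sigma^2-H_\theta^2=4r^4s^4(1+E\e^2)$.

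\emph{Part (i).} Now $r\ge2$ and $a\le1$, so $\e\le\frac12$, $\e^2\le\frac14$, $q\le1+\e^2$; the asserted inequality is trivial at $s=0$, so assume $s\ne0$. Dividing the bracket by $s^2$ and using $p/s^2=\frac{2\e^2}{rW}$ with $W:=1+\e^2c^2$,
\[
\tfrac1{s^2}(\text{bracket})=(1+\e^2)+\tfrac{2\e^2}{r}-\tfrac{4\e^2c^2(1+\e^2)(1+q)}{rW}-\tfrac{4\e^4c^2s^2(1+q)^2}{r^2W^2}.
\]
The key estimate is $\frac{c^2(1+\e^2)(1+q)}{W}=c^2q(1+q)\le2$: the map $w\mapsto wq(1+q)$ with $q=\frac{1+\e^2}{1+\e^2w}$ is increasing on $[0,1]$ (each summand $wq$, $wq^2$ has nonnegative derivative as $\e^2w\le1$) and equals $2$ at $w=c^2=1$; likewise $\frac{(1+q)^2}{W^2}\le(2+\e^2)^2$ because $W\ge1$. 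Using also $c^2s^2\le\frac14$ and $\e^2\le\frac14$, the two subtracted terms are at most $\frac{8\e^2}{r}$ and $\frac{81}{1024}$, whence
\[
\tfrac1{s^2}(\text{bracket})\ge(1+\e^2)-\tfrac{6\e^2}{r}-\tfrac{81}{1024}=1+\e^2\big(1-\tfrac6r\big)-\tfrac{81}{1024}.
\]
A short case split finishes: for $r\ge6$ the middle term is $\ge0$, and for $2\le r<6$ it is $\ge-2\e^2\ge-\frac12$ since $\e^2\le1/r^2\le1/4$; hence $\tfrac1{s^2}(\text{bracket})\ge\frac{431}{1024}>\frac{15}{256}$, so $4H\Sigma^2-H_\theta^2=4r^4s^2\,(\text{bracket})\ge\frac{15}{64}r^4s^4$.

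\emph{The main obstacle} is part (i). A crude term-by-term bound of $2c^2p(1+\e^2)(1+q)+c^2p^2(1+q)^2$ fails: the product of the individually bounded factors can slightly exceed $s^2$, giving a negative lower bound for the bracket. The estimate is rescued by two observations—that the dangerous factor $c^2(1+q)$ is $\le2$ with equality only at $c^2=1$, exactly where the accompanying factor $p\sim s^2$ vanishes, and that $\frac{8\e^2}{r}=\frac{8a^2}{r^3}\le1$ for $r\ge2$ can be near-tight only for $r$ close to $2$, where $\e^2$ is bounded below so the leading $(1+\e^2)s^2$ compensates. The rest is routine algebra.
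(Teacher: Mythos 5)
Your derivation of the identity and of part (ii) coincides with the paper's: expand using \eqref{e-shorthand} and Lemma \ref{l-dH-ddH}(ii), factor out $4r^4s^2$, and simplify via $p(1+\e^2c^2)=\frac{2\e^2s^2}{r}$ and $(1+\e^2)(1+\e^2c^2)-c^2(1+\e^2)^2=(1+\e^2)s^2$; part (ii) then follows by observing each correction term is $E\e^2s^2$. For part (i) your route is genuinely different in organization, though both arguments are correct and both land on the constant $\frac{15}{64}$. The paper first absorbs the piece $-2c^2p(1+\e^2)$ of the cross term into the leading terms to get the clean lower bound $s^2q$ (using $r\ge2$ to control $\frac{2\e^2}{r}(s^2-c^2-\e^2c^2)$), and then disposes of the remaining $-2c^2p(1+\e^2)q-c^2p^2(1+q)^2$ by the crude numerics $p\le\frac14 s^2$, $q\le\frac54$. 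You instead divide by $s^2$, isolate the sharp inequality $c^2q(1+q)\le2$ (proved by monotonicity in $w=c^2$, with equality exactly at the poles where $p$ vanishes), and finish with a case split $r\ge6$ versus $2\le r<6$ exploiting $\e\le 1/r$. Your version makes more transparent why the estimate does not degenerate near $c^2=1$, and in fact yields the stronger interior bound $\frac{431}{1024}s^2$ for the bracket; the paper's version avoids the monotonicity digression and the case analysis at the cost of tighter-looking but still elementary arithmetic. I checked your key steps ($c^2q(1+q)\le2$, the bound $\frac{81}{1024}$ on the quartic term, and the case split) and they are all valid under the paper's standing normalization $m=1$, $a\le m$.
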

\begin{proof}
\bee
\begin{split}
&4H\Sigma^2-H_\theta^2\\
=&4r^4s^2\lf[(1+\e^2)+p\ri](1+\e^2 c^2)-4r^4c^2s^2\lf[(1+\e^2)+p(1+q)\ri]^2\\
=&4r^4s^2\bigg\{\lf[(1+\e^2)(1+\e^2 c^2)+\frac{2\e^2 s^2}r\ri]-c^2(1+\e^2)^2-2c^2p(1+\e^2)(1+q)-c^2p^2(1+q)^2\bigg\}\\
=&4r^4s^2\bigg\{(1+\e^2)s^2+\frac{2\e^2 s^2}r-2c^2p(1+\e^2)(1+q)-c^2p^2(1+q)^2\bigg\}.
\end{split}
\eee
This proves the first part of the lemma. (ii) follows from this immediately.

To prove (i), suppose $r\ge 2$, then $\e \le \frac12$.

\bee
\begin{split}
(1+\e^2)s^2+&\frac{2\e^2 s^2}r-2c^2p(1+\e^2)\\
=&(1+\e^2)s^2+\frac{2\e^2 s^2}r-\frac{4c^2\e^2s^2(1+\e^2)}{r(1+\e^2c^2)}\\
%=&\frac{s^2}{1+\e^2c^2}\lf[(1+\e^2)(1+\e^2c^2)+\frac{2\e^2}r\lf(  (1+\e^2c^2)-2c^2(1+\e^2)\ri)\ri]\\
=&\frac{s^2}{1+\e^2c^2}\lf[(1+\e^2)(1+\e^2c^2)+\frac{2\e^2}r\lf(   s^2-\e^2c^2- c^2\ri)\ri]\\
\ge &\frac{s^2}{1+\e^2c^2}\lf[(1+\e^2)(1+\e^2c^2)+\e^2\lf(-\e^2c^2- c^2\ri)\ri]\\
=& s^2q.
\end{split}
\eee
Since
$$
p\le \frac14 s^2;\ 1\le q\le \frac54,
$$
 we have

\bee
\begin{split}
(1+\e^2)s^2+&\frac{2\e^2 s^2}r-2c^2p(1+\e^2)(1+q)-c^2p^2(1+q)^2\\
\ge& s^2q-2c^2p(1+\e^2)q-c^2p^2(1+q)^2\\
   \ge&  s^2 q(1-2\cdot\frac14\cdot\frac 54)-\frac{81}{256}s^2  \\
  %\ge &  {\red \frac 38 s^2 -\frac{81}{256}s^2 }    \\
\ge &\frac{15}{256}s^2.
\end{split}
\eee
From this the lemma follows.

\end{proof}
Using \eqref{e-embed-1}, \eqref{e-embed-2} :
 \begin{cor}\label{c-embed-1} Suppose $r\ge 2$, or $r\ge r_+$ and $\e$ is small enough, then $\mathcal{S}(r)$ can be isometrically embedded in $\R^3$ as a surface revolution.
 \end{cor}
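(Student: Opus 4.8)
The goal is to verify, for the Kerr surface $\mathcal{S}(r)$ with $r\ge 2$ (resp.\ $r\ge r_+$ and $\e$ small), the two standing hypotheses of Section~\ref{s-QLE}: that the profile function satisfies $\eta(0)=\eta(\pi)=0$, and that revolving the profile curve $(\eta(\theta),0,\xi(\theta))$ about the $z$-axis produces a smooth embedded closed surface. Once these hold, the identities $\la X_\theta,X_\theta\ra=\Sigma^2$, $\la X_\theta,X_\varphi\ra=0$, $\la X_\varphi,X_\varphi\ra=H$ recorded in Section~\ref{s-QLE} make $X$ an isometric embedding of $(\mathcal{S}(r),d\sigma^2)$, which is the assertion.

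The first steps are quick. By \eqref{e-shorthand}, $\eta^2=H=r^2\sin^2\theta\,(1+\e^2+p)$ with $p\ge 0$, so $\eta>0$ on $(0,\pi)$ while $\eta(0)=\eta(\pi)=0$. Next, $\xi$ in \eqref{e-embed-2} is defined precisely when $4H\Sigma^2-H_\theta^2\ge 0$, and Lemma~\ref{l-beta} in fact gives the stronger statement that $(4H\Sigma^2-H_\theta^2)/\sin^4\theta$ is bounded below by a positive constant on $[0,\pi]$: by (i) it is $\ge \tfrac{15}{64}r^4$ when $r\ge 2$, and by (ii) it equals $4r^4(1+E\e^2)\ge 2r^4$ once $\e$ is small enough. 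In particular $\xi'=\big(\tfrac{4H\Sigma^2-H_\theta^2}{4H}\big)^{1/2}>0$ on $(0,\pi)$, so $\xi$ is strictly increasing there; combined with $\eta>0$ on $(0,\pi)$ this shows that $X$ is injective on $(0,\pi)\times S^1$, that the open part of the image avoids the $z$-axis, and that the two poles map to the distinct points $(0,0,\xi(0))$ and $(0,0,\xi(\pi))$. Hence $X$ descends to a continuous injection of the compact surface $\mathcal{S}(r)$ into $\R^3$, i.e.\ a topological embedding.

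The substantive point is smoothness of $X$ across the two poles. I would exploit the rational structure of the data: $H$, $\Sigma^2$, $H_\theta$ are polynomials in $\sin\theta,\cos\theta$ over the positive denominator $1+\e^2\cos^2\theta$, so near $\theta=0$ both $H/\sin^2\theta=r^2(1+\e^2+p)$ and $(4H\Sigma^2-H_\theta^2)/\sin^4\theta$ are positive smooth functions of $\theta^2$. Consequently $H$ is a smooth function of $\theta^2$ vanishing to first order at $\theta^2=0$, and $\xi'=\sin\theta\cdot J(\theta^2)$ for a positive smooth $J$, so $\xi-\xi(0)=\int_0^\theta \sin t\cdot J(t^2)\,dt$ is an even smooth function of $\theta$, hence a smooth function of $\theta^2$. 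Using $(u,v)=(\eta\cos\varphi,\eta\sin\varphi)$ as a smooth chart on $\mathcal{S}(r)$ near the pole (legitimate since $\eta$ extends to an odd smooth function of $\theta$ with $\eta'(0)=\Sigma(0)>0$, comparable to a geodesic radial coordinate) and inverting $\theta^2\mapsto H=u^2+v^2$, one writes $\xi$ as a smooth function of $(u,v)$; thus $X=(u,v,\xi)$ is a smooth immersion near $\theta=0$, and likewise near $\theta=\pi$ after replacing $\theta$ by $\pi-\theta$. With the topological embedding from the previous step, this upgrades $X$ to a smooth embedding, proving the Corollary.

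The main obstacle is exactly this last paragraph. The bare inequality $4H\Sigma^2-H_\theta^2>0$ on $(0,\pi)$ is handed to us by Lemma~\ref{l-beta}, but to obtain a genuinely \emph{smooth} closed surface one must track the precise order of vanishing (four in $\sin\theta$, with positive leading coefficient) and the parity in $\theta$ of $4H\Sigma^2-H_\theta^2$ and of $H$ at the poles --- which is precisely the refined information encoded in the sharp forms (i) and (ii) of Lemma~\ref{l-beta}. Everything else is bookkeeping with the parametrization identities of Section~\ref{s-QLE}.
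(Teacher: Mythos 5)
Your proof is correct and follows the same route as the paper: the only substantive input is Lemma~\ref{l-beta}, which guarantees $4H\Sigma^2-H_\theta^2>0$ (indeed $\gtrsim \sin^4\theta$) so that the profile function $\xi$ in \eqref{e-embed-2} is well defined and the rotation construction of Section~\ref{s-QLE} applies. The paper states the corollary as an immediate consequence of \eqref{e-embed-1}--\eqref{e-embed-2} and Lemma~\ref{l-beta}; your additional parity analysis at the poles ($\eta$ odd with $\eta'(0)=\Sigma(0)$, $\xi$ even, hence smoothness of the closed surface of revolution) is a correct and worthwhile filling-in of details the paper leaves as standing assumptions.
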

Next we want to estimate the Gaussian curvature of $\mathcal{S}(r)$. We will use the expression in Lemma \ref{l-Gauss-1}. First, we have the following:
\begin{lma}\label{l-estimate-C}
\bee
\begin{split}
&\frac1{4r^4}\bigg\{2H_{\theta\theta}\Sigma^2-H_\theta(\Sigma^2)_\theta-4\Sigma^4\bigg\}\\
=&-2s^2-\e^2s^2-\e^2c^2s^2+p(1+q)\lf[3c^2+2\e^2c^2+\e^2c^4-s^2\ri]\\
&+   2\e^2c^2s^2pq.
\end{split}
\eee
\end{lma}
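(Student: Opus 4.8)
The plan is a direct substitution of Lemma \ref{l-dH-ddH} into the left-hand side, followed by careful bookkeeping in the variables $s,c,\e,p,q$ of \eqref{e-notations}. First I would record the three ingredients, each as $r^4$ times an expression in these variables. Writing $L:=(c^2-s^2)\big[(1+\e^2)+p(1+q)\big]+2c^2\big[pq(1+q)+\tfrac{s^2\e^2}{1+\e^2}pq^2\big]$ for the bracket in Lemma \ref{l-dH-ddH}(iii), one has, using $\Sigma^2=r^2(1+\e^2c^2)$ from \eqref{e-shorthand},
\[
2H_{\theta\theta}\Sigma^2=4r^4(1+\e^2c^2)L;
\]
using Lemma \ref{l-dH-ddH}(i) and (ii),
\[
H_\theta(\Sigma^2)_\theta=-4r^4\e^2c^2s^2\big[(1+\e^2)+p(1+q)\big];
\]
and $4\Sigma^4=4r^4(1+\e^2c^2)^2$. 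Dividing by $4r^4$ therefore reduces the claim to the polynomial identity (modulo $s^2+c^2=1$)
\[
(1+\e^2c^2)L+\e^2c^2s^2\big[(1+\e^2)+p(1+q)\big]-(1+\e^2c^2)^2=\text{(the right-hand side of the lemma)}.
\]

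The two algebraic facts I would exploit are $q(1+\e^2c^2)=1+\e^2$ and $(1+\e^2c^2)(c^2-s^2)+\e^2c^2s^2=c^2-s^2+\e^2c^4$, both immediate from \eqref{e-notations}. I would then split the expression into the part not containing $p$ and the part linear in $p$. The $p$-free part is $(1+\e^2c^2)(c^2-s^2)(1+\e^2)+\e^2c^2s^2(1+\e^2)-(1+\e^2c^2)^2=(1+\e^2)(c^2-s^2+\e^2c^4)-(1+\e^2c^2)^2$; expanding and collecting, and using $c^2-s^2-1=-2s^2$ together with $\e^2c^4-\e^2c^2=-\e^2c^2s^2$, this equals exactly $-2s^2-\e^2s^2-\e^2c^2s^2$, which is the first line of the claim.

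For the part linear in $p$ there are three contributions. The first, $(1+\e^2c^2)(c^2-s^2)p(1+q)+\e^2c^2s^2p(1+q)$, equals $p(1+q)(c^2-s^2+\e^2c^4)$ by the second identity above. The second, $2c^2(1+\e^2c^2)pq(1+q)$, equals $2c^2(1+\e^2)p(1+q)$ using $q(1+\e^2c^2)=1+\e^2$. The third, $2c^2\tfrac{s^2\e^2}{1+\e^2}(1+\e^2c^2)pq^2$, equals $2\e^2c^2s^2pq$ since $(1+\e^2c^2)q^2=(1+\e^2)q$. Adding the first two gives $p(1+q)\big[c^2-s^2+\e^2c^4+2c^2(1+\e^2)\big]=p(1+q)\big[3c^2+2\e^2c^2+\e^2c^4-s^2\big]$, and together with the third contribution this is exactly the $p$-dependent part of the claim. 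Summing with the $p$-free part proves the lemma.

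There is no conceptual obstacle here: the whole computation is expansion and collection of terms. The one spot requiring care is the third $p$-contribution, where the $1/(1+\e^2)$ denominator must be cleared by recognizing $(1+\e^2c^2)q^2=(1+\e^2)q$ rather than $(1+\e^2c^2)q=1+\e^2$; and, as always, keeping the signs straight when reducing $c^2-s^2-1$ and $\e^2c^4-\e^2c^2$. So the proof is essentially the bookkeeping sketched above.
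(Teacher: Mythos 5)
Your proposal is correct and follows essentially the same route as the paper: direct substitution of Lemma \ref{l-dH-ddH} into the left-hand side, followed by regrouping using the identities $q(1+\e^2c^2)=1+\e^2$ and $(1+\e^2c^2)q^2=(1+\e^2)q$; your split into the $p$-free and $p$-linear parts is only a minor reorganization of the paper's line-by-line expansion. All the individual algebraic reductions you cite check out.
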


\begin{proof}
\bee
\begin{split}
&\frac1{4r^4}\bigg\{2H_{\theta\theta}\Sigma^2-H_\theta(\Sigma^2)_\theta-4\Sigma^4\bigg\}\\
=&\lf\{(c^2-s^2)\lf[(1+\e^2)+p(1+q)\ri]+2c^2\lf[  pq(1+q)+\frac{s^2\e^2}{(1+\e^2)}pq^2\ri]\ri\}(1+\e^2c^2)\\
&+\e^2c^2s^2\lf[(1+\e^2)+p(1+q)\ri]-(1+\e^2c^2)^2\\
=&\lf\{-2s^2-\e^2 s^2+(c^2-s^2) p(1+q) +2c^2\lf[  pq(1+q)+\frac{s^2\e^2}{(1+\e^2)}pq^2\ri]\ri\}(1+\e^2c^2) \\
&+\e^2c^2s^2\lf[(1+\e^2)+p(1+q)\ri]\\
=&\lf(-2s^2-\e^2 s^2\ri)(1+\e^2c^2)+p(1+q)(c^2-s^2)(1+\e^2c^2) +2c^2p (1+q)(1+\e^2)+2\e^2c^2s^2pq \\
&+\e^2c^2s^2 (1+\e^2)+\e^2c^2s^2p(1+q) \\
=&\lf(-2s^2-\e^2 s^2\ri)(1+\e^2c^2)+p(1+q)\lf\{(c^2-s^2)(1+\e^2c^2)+2c^2\lf(1+\e^2\ri)+\e^2c^2s^2\ri\}\\
&+ 2 \e^2c^2s^2pq+\e^2c^2s^2(1+\e^2)\\
=&-2s^2-\e^2s^2-\e^2c^2s^2+p(1+q)\lf[3c^2+2\e^2c^2+\e^2c^4-s^2\ri]\\
&+   2\e^2c^2s^2pq.
\end{split}
\eee
\end{proof}

\begin{thm}\label{t-Gauss} Suppose $r\ge 2$, or $r\ge r_+$ and $\e$ is small then the Gaussian curvature of $\mathcal{S}(r)$ is positive.
\end{thm}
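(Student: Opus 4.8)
The plan is to use the explicit formula for the Gaussian curvature from Lemma \ref{l-Gauss-1} together with the expansions established in Lemma \ref{l-dH-ddH} and, crucially, the computation in Lemma \ref{l-estimate-C}. Writing $K$ in the form
\bee
K = \frac{1}{4}\cdot\frac{-H\big(2H_{\theta\theta}\Sigma^2 - H_\theta(\Sigma^2)_\theta - 4\Sigma^4\big) - (4H\Sigma^2 - H_\theta^2)\Sigma^2}{\Sigma^4 H^2},
\eee
the denominator $\Sigma^4 H^2$ is manifestly positive, so positivity of $K$ is equivalent to positivity of the numerator. I would factor out the common powers of $r$ and $s^2$: from \eqref{e-shorthand} we have $H = r^2 s^2(1+\e^2+p)$ and $\Sigma^2 = r^2(1+\e^2 c^2)$, while Lemma \ref{l-estimate-C} gives $2H_{\theta\theta}\Sigma^2 - H_\theta(\Sigma^2)_\theta - 4\Sigma^4 = 4r^4 \Phi$ and Lemma \ref{l-beta} gives $4H\Sigma^2 - H_\theta^2 = 4r^4 s^2 \Psi$ for explicit $\Phi, \Psi$. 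So the numerator equals (up to a positive constant times a positive power of $r$) the quantity
\bee
-\,s^2(1+\e^2+p)\,\Phi \;-\; s^2(1+\e^2 c^2)\,\Psi,
\eee
and it suffices to show this is positive, equivalently that $-(1+\e^2+p)\Phi - (1+\e^2 c^2)\Psi > 0$.

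First I would handle the Schwarzschild-type leading behavior. Setting $\e=0$ (so $p=0$, $q=1$), Lemma \ref{l-estimate-C} gives $\Phi_0 = -2s^2 = -2(1-c^2)$, wait — more carefully $\Phi|_{\e=0} = -2s^2$, and Lemma \ref{l-beta} gives $\Psi|_{\e=0} = s^2$. Then $-(1)\Phi_0 - (1)\Psi_0 = 2s^2 - s^2 = s^2 \ge 0$, which already shows the numerator is nonnegative in the round case and strictly positive away from the poles; one checks separately (using that the embedded surface is smooth and closed by hypothesis, or by l'Hôpital at $\theta = 0,\pi$) that $K$ extends positively to the poles. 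This is the sanity check that the scheme works. For the general case I would treat the two regimes of the hypothesis separately, exactly as in Lemma \ref{l-beta}. For $r \ge 2$, hence $\e \le \tfrac12$, I would bound all the $\e^2$-correction terms in $\Phi$ and $\Psi$ crudely using $p \le \tfrac14 s^2$, $1 \le q \le \tfrac54$, $0 \le s^2, c^2 \le 1$, and $\e^2 \le \tfrac14$, showing the total correction is smaller in absolute value than the leading gap $s^2$ coming from the Schwarzschild computation; this gives an explicit positive lower bound of the shape $c\, r^{\text{(power)}} s^{\text{(power)}}$ for the numerator, as in part (i) of Lemma \ref{l-beta}. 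For the second regime, $r \ge r_+$ with $\e$ small, I would instead expand everything to first order in $\e^2$, writing $\Phi = -2s^2 + E\e^2$ and $\Psi = s^2 + E\e^2$ in the paper's $E$-notation, so that the bracket becomes $s^2 + E\e^2$, which is positive once $\e$ is small enough (uniformly, since the $E$'s are bounded independent of $r, \theta$ for $r \ge 1$).

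The main obstacle I anticipate is bookkeeping near the poles $\theta = 0, \pi$, where $s \to 0$ and the naive lower bounds degenerate to $0$. Two points need care there: (a) one must confirm that $K$ itself (not just the numerator of the displayed expression) stays bounded and positive, which requires dividing the vanishing numerator by the vanishing $H^2 = r^4 s^4(1+\e^2+p)^2$ and seeing that the $s^4$ cancels — i.e. one should really show the numerator is $\ge c\, s^4 \cdot(\text{positive})$, not merely $\ge c\, s^2$; and (b) the embeddability hypothesis (Corollary \ref{c-embed-1}) and the assumed smoothness of the surface of revolution should be invoked to guarantee the curve closes up correctly at the poles. Concretely I would factor $s^2$ out of $\Phi$ and $\Psi$ wherever it legitimately appears — note $p = 2\e^2 s^2 / (r(1+\e^2 c^2))$ already carries a factor $s^2$, and in $\Phi$ the term $3c^2 p(1+q)$ does not obviously have one, so one must combine $3c^2 p(1+q) - 2s^2$ and check the $c^2 \to 1$ limit is controlled by the $-\e^2 s^2$ and higher terms — and then re-run the two-regime estimate to get a bound of the form (numerator) $\ge c\, r^{\text{power}} s^4$. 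Granting that, $K > 0$ everywhere on $\mathcal{S}(r)$, including at the poles by continuity, and the theorem follows; combined with Corollary \ref{c-embed-1} this also re-proves the embedding-as-surface-of-revolution claim, so Theorem \ref{t-BY} applies to $\mathcal{S}(r)$ for $r \ge 2m$.
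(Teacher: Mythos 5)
Your overall strategy is the paper's own: the formula from Lemma \ref{l-Gauss-1}, the inputs from Lemmas \ref{l-dH-ddH}, \ref{l-beta} and \ref{l-estimate-C}, and your reduction to showing $-(1+\e^2+p)\Phi-(1+\e^2c^2)\Psi>0$ is exactly the paper's $\wt K$ of \eqref{e-K-1}, including the two-regime split and the (correct) observation that one needs a lower bound of order $s^4$ on the numerator to cancel the $s^4$ in $H^2$ at the poles. The $\e\ll 1$ regime goes through as you describe, since every correction term carries a factor $s^2$ through $p$, so the bracket is $s^2(1+E\e^2)$.

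The gap is in the $r\ge 2$ regime. You propose to ``bound all the $\e^2$-correction terms in $\Phi$ and $\Psi$ crudely'' and show the total correction is smaller than the leading Schwarzschild gap. That cannot work: the single negative contribution $(1+\e^2+p)\,c^2p(1+q)\bigl(3+2\e^2+\e^2c^2\bigr)$ coming from $-(1+\e^2+p)\Phi$ is, under the crude bounds $p\le\e^2s^2$, $1+q\le\tfrac94$, $1+\e^2+p\le\tfrac32$, only controlled by roughly $12\,\e^2c^2s^2$, which at $\e^2=\tfrac14$ reaches about $3s^2$ for $c^2$ near $1$ --- far exceeding the leading gap $\tfrac{15}{16}s^2$. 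The paper's proof survives only because this term is paired not with $-2s^2$ (the pairing you suggest in your last paragraph) but with the \emph{positive} term $2c^2p(1+\e^2)(1+q)(1+\e^2c^2)$ arising from $-(1+\e^2c^2)\Psi$, together with the $c^2p^2(1+q)^2$ term: the exact algebraic cancellation carried out in \eqref{e-K-4} collapses the combination to $c^2p(1+q)(1+\e^2)(-\e^2s^2-1-\e^2-p)$, i.e.\ the effective coefficient drops from about $3$ to about $1$. Even after that, the final inequality closes with a thin margin ($\tfrac38+\tfrac{14}{16}\e^{-2}$ against $\tfrac94(\tfrac54)^2=\tfrac{225}{64}$, using $\e^{-2}\ge 4$), leaving only $\tfrac{1}{16}s^2$. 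So term-by-term crude estimation is not an adequate substitute; to complete the $r\ge2$ case you must identify and exploit this specific cancellation between $\Phi$ and $\Psi$.
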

\begin{proof} By Lemma \ref{l-Gauss-1}
$$
K=\frac14\cdot \frac{- H \lf(2H_{\theta\theta}\Sigma^2- H_\theta(\Sigma^2)_\theta-4\Sigma^4\ri)-(4H\Sigma^2-H_\theta^2)\Sigma^2}{\Sigma^4H^2}.
$$
By Lemmas \ref{l-dH-ddH}, \ref{l-beta} and \ref{l-estimate-C},
we have
\be\label{e-K-1}
\begin{split}
\wt K:=&\frac1{4r^6s^2}\bigg\{- H \lf(2H_{\theta\theta}\Sigma^2- H_\theta(\Sigma^2)_\theta-4\Sigma^4\ri)-(4H\Sigma^2-H_\theta^2)\Sigma^2\bigg\}\\
\ge& (1+\e^2+p)\bigg\{ {  2s^2}+\e^2s^2+{\e^2c^2s^2}-{  c^2p(1+q)\lf[3 +2\e^2 +\e^2c^2\ri]}\\
 &-  {   2\e^2c^2s^2pq}\bigg\}\\
&-(1+\e^2c^2) \bigg\{{(1+\e^2)s^2}+\frac{2\e^2 s^2}r-{  2c^2p(1+\e^2)(1+q)-c^2p^2(1+q)^2}\bigg\}
\end{split}
\ee
It is easy to see that if $\e\ll 1$ and if $r\ge 1$, then $K>0$.

Suppose $r\ge 2$,
then  $p\le \frac14, q\le\frac 54$. So
\be\label{e-K-2}
\begin{split}
{\e^2c^2s^2-2\e^2c^2s^2pq}=&\e^2c^2s^2(1-2pq)\\
\ge &\e^2c^2s^2(1-\frac58)\\
=&\frac38\e^2c^2s^2.
\end{split}
\ee
\be%\label{e-K-2}
\begin{split}
{ 2s^2(1+\e^2+p)-s^2(1+\e^2c^2)(1+\e^2)}=& s^2\lf(2+2\e^2+2p-1-\e^2-\e^2c^2-\e^4c^2\ri)\\
=&s^2\lf(1+\e^2s^2+2p-\e^4c^2\ri)\\
\ge&s^2(\frac{15}{16}+\e^2s^2+2p).
\end{split}
\ee
Now
\bee\label{e-K-3}
\begin{split}
&{  -c^2p(1+q)(1+\e^2+p)\lf(3 +2\e^2 +\e^2c^2\ri)+2c^2 p(1+q) (1+\e^2c^2)(1+\e^2)}\\
=& c^2p(1+q)\lf[(1+\e^2)(2+2\e^2c^2-3-2\e^2-\e^2c^2)-p\lf(3 +2\e^2 +\e^2c^2\ri)\ri]\\
=&c^2p(1+q)\lf[(1+\e^2)( \e^2c^2-1-2\e^2)-p\lf(3 +2\e^2 +\e^2c^2\ri)\ri]
\end{split}
\eee
Hence
\be \label{e-K-4}
\begin{split}
&-c^2p(1+q)(1+\e^2+p)\lf(3 +2\e^2 +\e^2c^2\ri)+2c^2 p(1+q) (1+\e^2c^2)(1+\e^2)\\
&+c^2p^2(1+q)^2(1+\e^2c^2)\\
=&c^2p(1+q) (1+\e^2)( \e^2c^2-1-2\e^2)+c^2p^2(1+q)\lf[(1+q)(1+\e^2c^2)-(3 +2\e^2 +\e^2c^2)\ri]\\
=&c^2p(1+q) (1+\e^2)(- \e^2s^2-1- \e^2)-c^2p^2(1+q)(1+\e^2) \\
=&c^2p(1+q)(1+\e^2)( -\e^2s^2-1- \e^2-p).\\
\end{split}
\ee
By \eqref{e-K-1}--\eqref{e-K-4}, using the facts that $r\ge 2$, $p\le \e^2s^2\le \frac14$,$\e^2\le \frac14$ and $q\le \frac54$, we have
\bee
\begin{split}
\wt K\ge&(1+\e^2+p)(\e^2s^2+\frac38\e^2c^2s^2)+(\frac{15}{16}+\e^2s^2+2p)s^2\\
&-(1+\e^2c^2)\e^2s^2- c^2p(1+q)(1+\e^2)(  \e^2s^2+1+ \e^2+p)\\
=& \frac38\e^2c^2s^2(1+\e^2)+\e^2s^2(1+\e^2)+(\frac{15}{16}+\e^2s^2)s^2+ s^2p(\e^2+\frac38 \e^2c^2+2)\\
&-(1+\e^2c^2)\e^2s^2-c^2p(1+q)(1+\e^2)^2-c^2p(1+q)(1+\e^2)(\e^2s^2+p)\\
\ge&\frac38\e^2c^2s^2(1+\e^2) + (\frac{15}{16}+\e^2s^2)s^2-c^2p(1+q)(1+\e^2)^2\\
&+p\e^2c^2s^2\lf(1+\frac38+8- \frac94\cdot\frac54\cdot 2 \ri)\\
\ge&\e^2c^2s^2\lf(\frac38+\frac{14}{16}\cdot\e^{-2}  -\frac94\cdot(\frac54)^2\ri)+\frac1{16}s^2\\
\ge&\e^2c^2s^2\lf(\frac38+\frac{14}{16}\cdot 4-\frac94\cdot(\frac54)^2\ri)+\frac1{16}s^2\\
\ge& \frac1{16}s^2,
\end{split}
\eee
because $\e^{-2}\ge 4$.
 From this it is easy to see that $K>0$.

\end{proof}
By Theorems \ref{t-BY}, \ref{t-Gauss}  and Corollary \ref{c-embed-1} we have the following:
\begin{cor}\label{c-embed-2} For $r\ge 2$, or $r\ge r_+$ and $\e\ll 1$, the spacelike surface $S: r={\text constant}$, $t={\text constant}$, the Brown-York quasilocal energy of $S$ is given by
$$
\mathfrak{m}_{BY}(\mathcal{S}(r))=E(\mathbf{N},\Omega(r);0,0).
$$

\end{cor}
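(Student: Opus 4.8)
The plan is to simply assemble the three ingredients already proved. By Corollary \ref{c-embed-1}, the hypothesis $r\ge 2$ (or $r\ge r_+$ with $\e\ll 1$) guarantees that $\mathcal{S}(r)$ admits an isometric embedding in $\R^3$ as a surface of revolution; in particular the embedding functions \eqref{e-embed-1}, \eqref{e-embed-2} are well-defined, since by Lemma \ref{l-beta} the integrand $-H_\theta^2+4H\Sigma^2$ is strictly positive on $(0,\pi)$, and the resulting closed curve can be rotated about the $z$-axis to produce a smooth embedded surface. By Theorem \ref{t-Gauss}, under the same hypothesis the Gaussian curvature of $\mathcal{S}(r)$ is positive, so the embedding is convex and the Brown-York mass $\mathfrak{m}_{\mathrm{BY}}(\mathcal{S}(r))$ is well-defined (the mean curvature $\kappa_0$ of the embedded surface is unambiguous, being taken with respect to the outward normal). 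Thus all the hypotheses of Theorem \ref{t-BY} are met.

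Concretely, I would argue as follows. First, invoke Corollary \ref{c-embed-1} to produce the isometric embedding $X(\theta,\varphi)$ of $(\mathcal{S}(r),d\sigma^2)$ into $\R^3$ as in \eqref{e-embed-3}, and note that the required boundary conditions $\eta(0)=\eta(\pi)=0$ hold because $H=r^2s^2(1+\e^2+p)$ vanishes exactly at $\theta=0,\pi$. Second, apply Theorem \ref{t-Gauss} to conclude $K>0$ on $\mathcal{S}(r)$, so that $\mathfrak{m}_{\mathrm{BY}}(\mathcal{S}(r))$ makes sense. Third, apply Theorem \ref{t-BY} with this embedding to obtain
\[
E(\mathbf{N},\Omega(r);0,0)=\mathfrak{m}_{\mathrm{BY}}(\mathcal{S}(r)),
\]
which is the asserted identity.

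I do not anticipate a genuine obstacle here: this corollary is purely a bookkeeping step that records the joint range of validity of Theorems \ref{t-BY}, \ref{t-Gauss} and Corollary \ref{c-embed-1}. The only minor point worth a sentence is to confirm that the "suitable conditions for isometric embedding" hypothesized in Theorem \ref{t-BY} are precisely the positivity of $-H_\theta^2+4H\Sigma^2$ together with the closure/smoothness of the surface of revolution, both of which are supplied by Lemma \ref{l-beta} and Corollary \ref{c-embed-1} in the stated regime. Once that is observed, the proof is immediate.
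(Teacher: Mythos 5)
Your proposal is correct and matches the paper's own argument, which derives the corollary by exactly the same assembly of Corollary \ref{c-embed-1} (existence of the isometric embedding as a surface of revolution), Theorem \ref{t-Gauss} (positivity of the Gaussian curvature), and Theorem \ref{t-BY} (the identity between the CNT energy at $x=y=0$ and the Brown--York mass). The additional remarks you make about $\eta(0)=\eta(\pi)=0$ and the positivity of $-H_\theta^2+4H\Sigma^2$ are consistent with Lemma \ref{l-beta} and only make explicit what the paper leaves implicit.
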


 \section{critical solutions to \eqref{e-xy-system-1}}\label{s-solution}

 As described in the introduction, we obtain the QLE in $E(\mathbf{N},\Omega(r);x,y)$ by finding critical value of $x, y$. Under the embedding of the form \eqref{e-TXYZ}, for the Kerr metric, $x, y$ should satisfy \eqref{e-xy-system-1}.
 In this section, we want to discuss the problems of existence and uniqueness of this system for the Kerr metric. First of all, let us rewrite the system.
 Recall the system \eqref{e-xy-system-1}
 is:
\be\label{e-xy-system-2}
\left\{
  \begin{array}{ll}
    y_\theta=  &\displaystyle{ -\frac{(\Sigma^2H)_r}{2HR^2}x+\lf(\frac{\Sigma_\theta}{\Sigma}-\frac{H_\theta}{2H}\ri)y } \\
    x_\theta= & \displaystyle{\frac{R_\theta}{R}x+\lf(\frac{(\Sigma^2H)_r}{2H\Sigma^2}-\frac{\a\beta+xy H_\theta}{2H\ell}\ri)y}.
  \end{array}
\right.
\ee
We consider the Kerr metric with $0\le a\le m$. We will discuss the system for $r>2m$ because of the results in section \ref{s-embedding}.
As in that section, let $s=\sin\theta$, $c=\cos\theta$, $\e=a/r$, $p= (2m\e^2s^2)/(r(1+\e^2 c^2))$, $q=(1+\e^2)/(1+\e^2c^2)$. Let
\be\label{e-wxy-1}
\left\{
  \begin{array}{ll}
    \wt H= &r^{-2}H=s^2(1+\e^2+p) ; \\
    \wt \Sigma^2=  & r^{-2}\Sigma^2=1+\e^2c^2; \\
\wt \triangle=&\displaystyle{r^{-2}\triangle=1+\e^2-\frac{2m}r};\\
\wt x=&\wt\triangle^\frac12 x;\\
\wt y=&r^{-1}y;\\
    \wt\ell =  & r^{-2}\ell=\wt y^2+1+\e^2 c^2;\\
\wt \a=&r^{-1} \wt\triangle^\frac12\a=\lf[(1+\e^2c^2)(\wt x^2+\wt y^2+1+\e^2c^2)\ri]^\frac12;\\
\wt \beta=&r^{-2}\beta=\displaystyle{2s\lf[ (1+\e^2+p) \wt y^2+(1+\e^2+\frac{2m\e^2}r)s^2-c^2p(1+q)(2(1+\e^2)+p(1+q)) \ri]^\frac12}
  \end{array}
\right.
\ee
where we have used  Lemmas \ref{l-dH-ddH} and \ref{l-beta}. Note that $\wt\a, \wt\beta, \wt\ell $ depend also the functions $\wt x, \wt y$. Direct computations show that
\bee
\frac{(\Sigma^2H)_r}{2HR^2}=\displaystyle{rP\wt\triangle  }
\eee
where $P=(1+\frac12p+q)/(1+\e^2+p)$. Also,
\bee
\frac{(\Sigma^2H)_r}{2H\Sigma^2}=\frac1r P, \ \ \frac{R_\theta}R=-\frac{\e^2 cs}{1+\e^2 c^2}
\eee
Using Lemmas \ref{l-dH-ddH}
 one can  check that \eqref{e-xy-system-2} is equivalent to
 \be\label{e-xy-system-3}
\left\{
  \begin{array}{ll}
    \wt y_\theta=  &\displaystyle{ - \wt\triangle^\frac12 P \wt x-\lf(\frac {\e^2cs}{1+\e^2 c^2} +\frac cs \lf(1+\frac{pq}{1+\e^2+p}\ri) \ri)\wt y } \\
   \wt x_\theta= & \displaystyle{\wt \triangle^\frac12 P\wt y-\frac{\e^2 cs}{1+\e^2 c^2}\wt x- \frac{\wt \a\wt\beta+\wt x\wt y \wt H_\theta}{2\wt H\wt \ell} \wt y}.
  \end{array}
\right.
\ee
Since $c/s$ and $\wt H_\theta/H$ will become infinite when $\theta\to0$ or $\pi$, we cannot use the apply standard theory to discuss the system. However, we still have   the following uniqueness result.
\begin{thm}\label{t-uniqueness} Let $\wt x_i, \wt y_i$, $i=1, 2$ be two sets of solutions to \eqref{e-xy-system-3} in $[0,\pi]$ with bounded derivatives.  Then $\wt y_i(0)=\wt y_i(\pi)=0$, $i=1 ,2$. Moreover, if  $\wt x_1(0)=\wt x_2(0)$ (or $\wt x_1(\pi)=\wt x_2(\pi))$, then  $\wt x_1=\wt x_2, \wt y_1=\wt y_2$.
\end{thm}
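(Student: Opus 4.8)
The plan is to treat the system \eqref{e-xy-system-3} as a linear-type ODE system with coefficients that are singular of order $1/s = 1/\sin\theta$ at the endpoints $\theta=0,\pi$, and to extract from that singularity both the boundary behavior $\wt y_i(0)=\wt y_i(\pi)=0$ and a uniqueness statement. First I would establish $\wt y_i(0)=0$: multiply the first equation of \eqref{e-xy-system-3} by $s$ and observe that $s\,\wt y_\theta + c\,\wt y = (s\,\wt y)_\theta$, so that
\be
(s\,\wt y)_\theta = -s\,\wt\triangle^\frac12 P\,\wt x - \lf(\frac{\e^2 cs^2}{1+\e^2 c^2} + \frac{c\,pq}{1+\e^2+p}\cdot s\ri)\wt y - (\text{the }c\,\wt y\text{ already absorbed}).
\ee
More carefully, writing the first equation as $\wt y_\theta = -\wt\triangle^\frac12 P\wt x - (c/s)\wt y - (\text{bounded})\wt y$, I get $(s\wt y)_\theta = s\wt y_\theta + c\wt y = -s\wt\triangle^\frac12 P\wt x - (\text{bounded})\cdot s\cdot\wt y$, whose right-hand side is bounded on $[0,\pi]$ because $\wt x,\wt y$ are bounded (they are $C^1$ with bounded derivatives on the compact interval, hence bounded). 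Thus $s\wt y$ is Lipschitz, and evaluating at $\theta=0$ (where $s=0$) gives $\lim_{\theta\to0^+}(s\wt y)=0$ trivially; but more is true — since $\wt y$ itself is bounded and $s\to0$, to conclude $\wt y(0)=0$ I instead argue: near $\theta=0$, $\wt y_\theta = -(c/s)\wt y + O(1)$, so if $\wt y(0)=b\ne0$ then $\wt y_\theta \sim -b/\theta$, forcing $\wt y$ to blow up like $-b\log\theta$, contradicting boundedness. Hence $\wt y(0)=0$, and symmetrically $\wt y(\pi)=0$.

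For the uniqueness statement, set $u = \wt x_1 - \wt x_2$, $v = \wt y_1 - \wt y_2$. Subtracting the two copies of \eqref{e-xy-system-3} yields a system of the schematic form
\be
\left\{
\begin{array}{ll}
v_\theta = & -\wt\triangle^\frac12 P\,u - \lf(\dfrac{c}{s} + b_1(\theta)\ri) v, \\[2mm]
u_\theta = & \wt\triangle^\frac12 P\,v - b_2(\theta)\,u + \lf(\text{terms from } \dfrac{\wt\a\wt\beta + \wt x\wt y\wt H_\theta}{2\wt H\wt\ell}\ri),
\end{array}
\right.
\ee
where $b_1, b_2$ are bounded, and the last bracket is a difference of a fixed smooth-in-$(\wt x,\wt y)$ expression evaluated at the two solutions, hence equals $(\text{bounded})\cdot u + (\text{bounded})\cdot v$ — but one must check that the nominally singular factor $\wt H_\theta/\wt H \sim (c/s)$ multiplying $v$ in that bracket combines with the explicit $-(c/s)v$ term in the $v$-equation in a controlled way, or is itself multiplied by $\wt y$ (which vanishes at the endpoints like $s$, by the first part) so that $\wt y\,\wt H_\theta/\wt H$ stays bounded. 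Granting that, the system for $(u,v)$ has the form $w_\theta = M(\theta) w$ with $M(\theta) = -\tfrac{c}{s}\,\mathrm{diag}(0,1) + (\text{bounded})$. The key is that the singular part is a \emph{nonpositive} multiple of $v$ near $\theta=0$ (and a nonnegative one near $\theta=\pi$), so the singularity is "dissipative" for the $v$-component going forward from $0$. I would run a Gronwall argument on $\theta\mapsto u(\theta)^2 + v(\theta)^2$: its derivative is $2uu_\theta + 2vv_\theta = -2\tfrac{c}{s}v^2 + (\text{bounded})(u^2+v^2)$, and since $-2(c/s)v^2 \le 0$ on $(0,\pi/2]$, we get $(u^2+v^2)_\theta \le C(u^2+v^2)$ there; starting from $u(0)=0$ (the hypothesis $\wt x_1(0)=\wt x_2(0)$) and $v(0)=0$ (just proved), Gronwall gives $u\equiv v\equiv0$ on $[0,\pi/2]$, and then on all of $[0,\pi]$ by unique continuation (the coefficients are smooth away from the endpoints, so once the solution vanishes on an interval it vanishes everywhere, or one re-runs the estimate toward $\pi$ where $-2(c/s)v^2\ge$ nothing — there I would instead start the Gronwall from $\theta=\pi/2$ where we now know $u=v=0$). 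The case $\wt x_1(\pi)=\wt x_2(\pi)$ is handled by the mirror-image argument, integrating backward from $\pi$ where $-2(c/s)v^2\ge0$ is dissipative in the backward direction.

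I expect the main obstacle to be the bracket term $(\wt\a\wt\beta + \wt x\wt y\wt H_\theta)/(2\wt H\wt\ell)$ in the $\wt x$-equation: I need to verify both that it is well-defined and bounded up to the endpoints for each individual solution (this is essentially guaranteed by Lemma \ref{l-beta}(i), which shows $\beta^2 = 4H\Sigma^2 - H_\theta^2$ stays positive and in fact $\beta/s$ is bounded below, together with $\wt H/s^2$ bounded below, so $\wt\beta/\wt H$ and $\wt H_\theta\wt y/\wt H$ are bounded since $\wt y = O(s)$), and that its difference at the two solutions is Lipschitz in $(u,v)$ with constants uniform on $[0,\pi]$. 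Once this boundedness is in hand, the rest is a routine Gronwall/unique-continuation argument exploiting the sign of the $c/s$ term; the only subtlety there is making sure the single hypothesis $u(0)=0$ (rather than $u(0)=v(0)=0$ being assumed) suffices, which it does precisely because $v(0)=0$ comes for free from the boundedness of derivatives.
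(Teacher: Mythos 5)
Your proposal is correct and follows essentially the same route as the paper: endpoint vanishing of $\wt y$ (indeed $\wt y=O(s)$) forced by the $c/s$ coefficient together with bounded derivatives, then a Gronwall estimate on $(\wt x_1-\wt x_2)^2+(\wt y_1-\wt y_2)^2$ over $[0,\pi/2]$ exploiting the sign of $\cos\theta$, with the nonlinear bracket controlled by exactly the Lipschitz bounds you flag (via Lemma \ref{l-beta} giving $\wt\beta\sim s^2$ and $\wt y=O(s)$), followed by interior ODE uniqueness and the mirror argument at $\theta=\pi$. The verification you defer is carried out in the paper's inequalities \eqref{e-uniqueness-2}--\eqref{e-uniqueness-3} and goes through as you anticipate.
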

\begin{proof} It is easy to see that $P$ and $\wt \Delta$ are bounded below away from 0 in $[0,\pi]$. By the first equation of the system, we conclude that
 $|\wt y_i| \le c_1 s$ for some constant $c_1$, $i= 1,2$. In particular, $\wt y_i(0)=\wt y_i(\pi)=0$.

 Suppose  $\wt x_1(0)=\wt x_2(0)$ we want to prove that:
 \be\label{e-uniqueness-main}
 \frac12[(\wt y_1-\wt y_2)^2+(\wt x_1-\wt x_2)^2]_\theta\le c_2[(\wt y_1-\wt y_2)^2+(\wt x_1-\wt x_2)^2]
 \ee
 for some constant $c_2$ in $[0,\pi/2]$. Suppose this is true, then one can conclude that $[(\wt y_1-\wt y_2)^2+(\wt x_1-\wt x_2)^2]=0$ on $[0,\pi/2]$. Since the system is well-behaved in $(0,\pi)$ one can apply the uniqueness of solutions of ODE to conclude that the proposition is true for this case. The case that $\wt x_1(\pi)=\wt x_2(\pi)$ is similar.

  To prove \eqref{e-uniqueness-main},
by the first equation of the system, in $[0,\pi/2]$, we have
\be\label{e-uniqueness-1}
\begin{split}
&\frac12[(\wt y_1-\wt y_2)^2]_\theta\\
=&- \wt\triangle^\frac12 P (\wt x_1-\wt x_2)(\wt y_1-\wt y_2)-\lf(\frac {\e^2cs}{1+\e^2 c^2} +\frac cs \lf(1+\frac{pq}{1+\e^2+p}\ri) \ri)(\wt y_1-\wt y_2)^2\\
\le &c_3\lf[ (\wt x_1-\wt x_2)^2+(\wt y_1-\wt y_2)^2\ri]
\end{split}
\ee
for some constant $c_3>0$ because $c=\cos\theta\ge0$ in $[0,\pi/2]$.
To estimate $\frac12[(\wt x_1-\wt x_2)^2]_\theta$, let us denote $\wt \a(\wt x_i,\wt y_i,\theta)$ by $\wt\a_i$, and define $\wt \beta_i$, $\wt \ell_i$ similarly, $i=1, 2$.

One can check that
\be\label{e-uniqueness-2}
\begin{split}
|  \wt \a_1-\wt\a_2|=&\frac{\lf|\wt\a_1^2-\wt\a_2^2\ri|}{\wt\a_1+\wt\a_2}\\
=&\frac{\lf|(1+\e^2c^2)\lf(\wt x_1^2+\wt y_1^2-\wt x_2^2-\wt y_2^2\ri)\ri|}{\wt\a_1+\wt\a_2}\\
\le &  c_4(|\wt x_1-\wt x_2|+|\wt y_1-\wt y_2|)
\end{split}
\ee
for some constant $c_4$ which may also depend on the bounds of the solutions $\wt x_i,\wt y_i$, where we have used the fact that $\wt\a_i\ge 1$. Similarly,
\be\label{e-uniqueness-2-1}
\begin{split}
\lf|  \frac1{\wt \ell_1}-\frac1{\wt\ell_2}\ri|=&\frac{\lf|\wt y_1^2-\wt y_2^2\ri|}{\wt\ell_1\wt\ell_2}\\
\le &  c_5(|\wt y_1-\wt y_2|)
\end{split}
\ee
for some constant $c_5$ because $\wt \ell_i\ge 1$. By Lemma \ref{l-beta}, we have
$$
\wt \beta_i=r^{-2}\beta_i\ge \lf(\frac{15}{64}\ri)^\frac12s^2
$$
for some constant $c_5$ for $r>2m$. So we have
\be\label{e-uniqueness-3}
\begin{split}
|  \wt \beta_1-\wt\beta_2|=&\frac{\lf|\wt\beta_1^2-\wt\beta_2^2\ri|}{\wt\beta_1+\wt\beta_2}\\
=&\frac{ 4s^2(1+\e^2c^2)\lf|   \wt y_1^2 -\wt y_2^2 \ri|}{\wt\beta_1+\wt\beta_2}\\
=&\frac{ 4s^2(1+\e^2c^2)\lf|   (\wt y_1+\wt y_2)(\wt y_1-\wt y_2) \ri|}{\wt\beta_1+\wt\beta_2}\\
\le &  c_6s | \wt y_1-\wt y_2 |
\end{split}
\ee
for some constant $c_6>0$, where we have used the fact that near $\theta=0, \pi$, $|\wt y_i|\le c_{1}s$ for some constant $c_{1}$. By the second equation of \eqref{e-xy-system-3}, and by  \eqref{e-uniqueness-2}--\eqref{e-uniqueness-3}, we have
\bee
\begin{split}
&\frac12[(\wt x_1-\wt x_2)^2]_\theta\\
&\le c_7[(\wt y_1-\wt y_2)^2+(\wt x_1-\wt x_2)^2]+(\wt x_1-\wt x_2) \lf(\displaystyle{-\frac{\wt \a_1\wt\beta_1+\wt x_1\wt y_1 \wt H_\theta}{2\wt H\wt \ell_1} \wt y_1+ \frac{\wt \a_2\wt\beta_2+\wt x_2\wt y_2\wt H_\theta}{2\wt H\wt \ell_2} \wt y_2}\ri)\\
=&c_7[(\wt y_1-\wt y_2)^2+(\wt x_1-\wt x_2)^2]- \displaystyle{\frac{\wt \a_1\wt\beta_1+\wt x_1\wt y_1 \wt H_\theta}{2\wt H\wt \ell_1} (\wt y_1-\wt y_2)}(\wt x_1-\wt x_2) \\
&+ \displaystyle{\lf(\frac{\wt \a_2\wt\beta_2+\wt x_2\wt y_2\wt H_\theta}{2\wt H\wt \ell_2}-\frac{\wt \a_1\wt\beta_1+\wt x_1\wt y_1 \wt H_\theta}{2\wt H\wt \ell_1}\ri) (\wt x_1-\wt x_2) \wt y_2 }\\
\le &c_8[(\wt y_1-\wt y_2)^2+(\wt x_1-\wt x_2)^2]
\end{split}
\eee
for some constants $c_7, c_8$, where we have used the facts that $|\wt y_i|\le c_1s$, $|\wt \beta_i|\le c_9s^2$ for some constant $c_9$. Combining the above inequality with \eqref{e-uniqueness-1}, we conclude that \eqref{e-uniqueness-main} is true. This completes the proof of the theorem.

\end{proof}

We apply the theorem to the Minkowski spacetime. In this case, $a=m=0$ and \eqref{e-xy-system-3} becomes:

 \be\label{e-xy-system-M}
\left\{
  \begin{array}{ll}
    \wt y_\theta=  &\displaystyle{ - 2\wt x -\frac cs \wt y } \\
   \wt x_\theta= & \displaystyle{2 \wt y-\lf( \frac{(\wt x^2+\wt y^2+1)^\frac12(s^2+\wt y^2)^\frac12}{s(1+\wt y^2)}+\frac{\wt x\wt y c}{s(1+\wt y^2)}\ri)\wt y }.
  \end{array}
\right.
\ee
For this system, nontrivial solution exists: $\wt x=-k\cos\theta, \wt y=k\sin\theta$, $k=$constant, (see  \cite{SCLN:2013,JLTH}). So $x=-k\cos\theta, y=  kr\sin\theta$ solve \eqref{e-xy-system-2}.  The nontrivial solution corresponds to the \emph{inertial observer}. One can look at the displacement vector (see \cite{SCLN:2013} (58)). For the trivial solution ($k=0$), $\mathbf{N}=\partial_{t}$ is the \emph{static observer}; for constant $k$, the displacement vector is
\bee
 \mathbf{N}=\sqrt{1+k^2}\partial_{t}+k\cos\theta\partial_{r}-r^{-1}k\sin\theta\partial_{\theta}=\sqrt{1+k^2}\partial_{t}+k\partial_{x'}
\eee
for the coordinate transformation $x'=r\cos\theta$, $y'=r\sin\theta$. It is a Lorentz transformation in the $t-x'$ plane of the static observer with constant velocity $-k/\sqrt{1+k^2}$ in the $x'$ direction. One can check that the $E(\mathbf{N},\Omega(r);x,y)=0$ in this case. This reflects the fact that each inertial observer is equivalent and measures zero energy for Minkowski spacetime. By Theorem \ref{t-uniqueness}, these are the only solutions for the system \eqref{e-xy-system-2}.
\begin{cor}\label{c-Minkowski}
Let $x, y$ be solutions to the system \eqref{e-xy-system-2}. Suppose the derivatives of $x, y$ with respect to $\theta$ are bounded, then $x=-k\cos\theta$, $y=  kr\sin\theta$.
\end{cor}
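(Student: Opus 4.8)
The plan is to obtain this corollary as an immediate specialization of the uniqueness Theorem \ref{t-uniqueness} to Minkowski spacetime, once it is matched up with the explicit one-parameter family of solutions recorded above. The first step is bookkeeping: setting $a=m=0$ forces $\e=0$, $p=0$, $q=1$, and $\wt\triangle\equiv1$ in \eqref{e-wxy-1}, so the rescaled unknowns are simply $\wt x=x$ and $\wt y=r^{-1}y$ with $r$ a fixed constant on $\mathcal{S}(r)$. Consequently the hypothesis that $x_\theta$ and $y_\theta$ are bounded is the same as $\wt x_\theta,\wt y_\theta$ being bounded, and the system \eqref{e-xy-system-2} for the Minkowski metric is precisely \eqref{e-xy-system-M}.

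The second step is to note that the proof of Theorem \ref{t-uniqueness} goes through for $m=0$: indeed $\wt\triangle\equiv1$ and $P\equiv2$ are bounded below away from $0$, while from \eqref{e-wxy-1} one has $\wt\beta=2s(\wt y^2+s^2)^{1/2}\ge2s^2$, so all the coefficient bounds $c_1,\dots,c_8$ appearing there remain finite (the condition $r>2m$ in that proof is vacuous here). Hence for any solution $\wt x,\wt y$ of \eqref{e-xy-system-M} with bounded derivatives one automatically has $\wt y(0)=0$, and two such solutions agreeing at $\theta=0$ coincide on all of $[0,\pi]$.

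The final step is the comparison. Given a solution $x,y$ as in the statement, set $k:=-\wt x(0)=-x(0)$ and compare with $\wt x=-k\cos\theta$, $\wt y=k\sin\theta$; a short direct substitution confirms that the latter solves \eqref{e-xy-system-M} (the coefficient in brackets in the second equation collapses to $1$ on this family because $\wt x^2+\wt y^2=k^2$). Both solutions have bounded $\theta$-derivatives and agree at $\theta=0$, so Theorem \ref{t-uniqueness} forces $\wt x=-k\cos\theta$ and $\wt y=k\sin\theta$, i.e. $x=-k\cos\theta$ and $y=rk\sin\theta$. I do not anticipate a real obstacle; the only points requiring care are verifying that no coefficient bound used in Theorem \ref{t-uniqueness} degenerates when $m=0$ — in particular that $\wt\beta$ still vanishes to order exactly $s^2$, which keeps \eqref{e-uniqueness-3} valid — and carrying out the one-line check that the stated family is indeed a solution.
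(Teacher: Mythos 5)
Your proposal is correct and follows essentially the same route as the paper: the paper's proof is the one-line observation that the claim follows from Theorem \ref{t-uniqueness} by taking $k=-x(0)$, exactly as in your final step. Your additional checks (that the constants in the proof of Theorem \ref{t-uniqueness} do not degenerate when $m=0$, and that $\wt x=-k\cos\theta$, $\wt y=k\sin\theta$ indeed solves \eqref{e-xy-system-M}) are correct and merely make explicit what the paper leaves implicit.
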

\begin{proof} This follows from Theorem \ref{t-uniqueness} by letting $k=-x(0)$.

\end{proof}

 Next we want to prove that in case a horizon exists, then \eqref{e-xy-system-2} may only have trivial solutions for a fixed $r$. With the same notations as before. We still normalize so that $0\le a\le m=1$. Also, $r_+=1+(1-a^2)^\frac12$ is the larger root of $r^2-2r+a^2=0$.  We have:
 \begin{thm}\label{t-xy-system} In the Kerr spacetime, for $8/3>r>r_+$,  if $\e\ll 1$, then any solution to \eqref{e-xy-system-2} with $|x_\theta|, |y_\theta|$ being bounded must be the trivial  solution: $x\equiv0, y\equiv0$.

 \end{thm}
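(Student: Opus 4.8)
The plan is to combine the uniqueness Theorem~\ref{t-uniqueness} with an energy/integrating–factor analysis of the rescaled system \eqref{e-xy-system-3}. Since the trivial solution solves \eqref{e-xy-system-3} and, by Theorem~\ref{t-uniqueness}, a solution with bounded $\theta$–derivatives is determined by its value at $\theta=0$, it suffices to show that every such solution has $\wt x(0)=0$. Moreover $(\wt x,\wt y)\mapsto(-\wt x,-\wt y)$ is a symmetry of \eqref{e-xy-system-3}, so I may assume $\wt x(0)\ge 0$. From the proof of Theorem~\ref{t-uniqueness} we already have $\wt y(0)=\wt y(\pi)=0$. Writing the first equation of \eqref{e-xy-system-3} as $\wt y_\theta+P_2\wt y=-\wt\triangle^{\frac12}P\,\wt x$, where $P_2$ denotes the (positive, $\sim \frac cs$ near the poles) coefficient of $\wt y$, its integrating factor $\mu(\theta)=\exp\!\big(\int_{\pi/2}^{\theta}P_2\big)$ satisfies $\mu(\theta)\asymp \sin\theta$ on $[0,\pi]$ (the only singular part of $\int P_2$ is $\int\frac cs(1+\frac{pq}{1+\e^2+p})\,d\theta=\log\sin\theta+(\text{bounded})$). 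Hence $\mu(\theta)\wt y(\theta)=-\int_0^\theta\mu\,\wt\triangle^{\frac12}P\,\wt x$, which, using $P$ bounded, $r<8/3$, $\e\ll1$, and $\tan(\theta/2)\le\sin\theta$ on $[0,\pi/2]$, yields the quantitative bound $|\wt y(\theta)|\le C\,\wt\triangle^{\frac12}\|\wt x\|_\infty\sin\theta$; integrating over all of $[0,\pi]$ (with $\mu(0)=\mu(\pi)=0$) gives the identity
\be\label{e-plan-sign}
\int_0^\pi \mu\,\wt\triangle^{\frac12}P\,\wt x\,d\theta=0 .
\ee

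Next I would use the energy $Q=\tfrac12(\wt x^2+\wt y^2)$. Adding $\wt x\,\wt x_\theta$ to $\wt y\,\wt y_\theta$ from \eqref{e-xy-system-3}, the cross terms $\pm\wt\triangle^{\frac12}P\,\wt x\wt y$ cancel, leaving
\be\label{e-plan-Q}
Q_\theta=-\frac{\e^2cs}{1+\e^2c^2}(\wt x^2+\wt y^2)-\frac{\wt\a\wt\beta}{2\wt H\wt\ell}\wt x\wt y-\frac{\wt H_\theta}{2\wt H\wt\ell}\wt x^2\wt y^2-\frac cs\Big(1+\frac{pq}{1+\e^2+p}\Big)\wt y^2 .
\ee
On $[0,\tfrac\pi2]$ the first, third and fourth terms are $\le0$, and the middle term — using $\wt\beta=2s\,\wt\beta_1^{1/2}$ with $\wt\beta_1\le(1+\e^2+p)\wt y^2+Cs^2$ (from \eqref{e-wxy-1}), $\wt H=s^2(1+\e^2+p)$, $\wt\a\le C(1+|\wt x|)$, and the bound on $\wt y$ above — has absolute value $\le C\,\wt\triangle^{\frac12}\,\Phi(\|\wt x\|_\infty)\,|\wt x|\,s$ for a fixed polynomial $\Phi$ with $\Phi(0)=0$. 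Feeding the bound on $\wt y$ into the second equation of \eqref{e-xy-system-3} (written as $\wt x_\theta+P_3\wt x=(\wt\triangle^{\frac12}P-P_4)\wt y$, with $P_3=\e^2cs/(1+\e^2c^2)=O(\e^2)$, $\nu=\exp(\int P_3)=1+O(\e^2)$, and $|P_4|=\big|\tfrac{\wt\a\wt\beta+\wt x\wt y\wt H_\theta}{2\wt H\wt\ell}\big|$ controlled in terms of $\|\wt x\|_\infty$ via \eqref{e-shorthand} and Lemma~\ref{l-beta}) and integrating gives an estimate of the shape
\be\label{e-plan-x}
\|\wt x-\wt x(0)\|_\infty\le C\big(\e^2+\wt\triangle^{\frac12}\big)\,\Phi\!\big(\|\wt x\|_\infty\big).
\ee
Here the hypotheses $r<8/3$ (so $\wt\triangle<\tfrac14$, equivalently the cross–coupling $\wt\triangle^{\frac12}P<1$) and $\e\ll1$ are used to bootstrap \eqref{e-plan-x}, first to $\|\wt x\|_\infty\le 2|\wt x(0)|$ and then to $\|\wt x-\wt x(0)\|_\infty\le\tfrac12|\wt x(0)|$; in particular $\wt x$ keeps the sign of $\wt x(0)$ throughout $[0,\pi]$.

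To conclude: if $\wt x(0)>0$ then $\wt x>0$ on $(0,\pi)$, while $\mu\,\wt\triangle^{\frac12}P>0$ there, contradicting \eqref{e-plan-sign}. Hence $\wt x(0)=0$, and then \eqref{e-plan-x} (or, more cheaply, Theorem~\ref{t-uniqueness} applied with the trivial solution) gives $\wt x\equiv0$; the first equation of \eqref{e-xy-system-3} then forces $\wt y\equiv0$, i.e. $x\equiv y\equiv0$ in \eqref{e-xy-system-2}.

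The hard part is the a priori bound \eqref{e-plan-x} and its bootstrapping. The obstruction is the genuinely nonlinear term $\frac{\wt\a\wt\beta}{2\wt H\wt\ell}\wt x\wt y$ in \eqref{e-plan-Q}: near the poles it is of the same order as the "good" negative terms of \eqref{e-plan-Q} (one can check that $Q$ is in fact increasing near $\theta=0$ whenever $\wt x(0)\ne0$), so no purely energetic argument on $[0,\pi/2]$ suffices; instead one must extract a \emph{closed} bound on $\|\wt x\|_\infty$ in terms of $|\wt x(0)|$. This is exactly where the precise size of $\wt\triangle$ (hence $8/3>r>r_+$) and the smallness of $\e$ enter, together with the sharp $\theta$–asymptotics recorded in \eqref{e-shorthand}, Lemma~\ref{l-beta} and Lemma~\ref{l-estimate-C}.
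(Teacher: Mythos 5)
Your outline (reduce to $\wt x(0)=0$ via Theorem~\ref{t-uniqueness}, derive the integral identity $\int_0^\pi \mu\,\wt\triangle^{1/2}P\,\wt x\,d\theta=0$ from the first equation with the integrating factor $\mu\asymp\sin\theta$, then rule out sign-definite $\wt x$) is internally coherent, and the identity itself is correct. But the step that carries all the weight --- the a priori bound \eqref{e-plan-x} and its bootstrap to sign preservation of $\wt x$ --- is asserted, not proved, and I do not believe it can be closed in the form you state. The hypotheses give only that $|x_\theta|,|y_\theta|$ are finite; there is no quantitative a priori control on $\|\wt x\|_\infty$ or on $|\wt x(0)|$. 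Since $\wt\a$ grows like $|\wt x|$ and the term $\wt x\wt y\wt H_\theta/(2\wt H\wt\ell)$ is quadratic in the solution, your $\Phi$ is necessarily superlinear, and the bootstrap inequality $C(\e^2+\wt\triangle^{1/2})\Phi(\|\wt x\|_\infty)\le\tfrac12\|\wt x\|_\infty$ fails once $\|\wt x\|_\infty$ is large compared with $(\e^2+\wt\triangle^{1/2})^{-1/2}$ --- a regime you cannot exclude ($\e$ and $\wt\triangle$ are fixed by $r,a$, not by the solution; recall that in Minkowski the system admits solutions $\wt x=-k\cos\theta$ for \emph{every} $k$, and for those the sign preservation you need genuinely fails). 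Even the linear part of your contraction constant is in doubt over the stated range: a direct estimate of $\int_0^\pi\wt\triangle^{1/2}P|\wt y|$ using $|\wt y|\le C\wt\triangle^{1/2}\|\wt x\|_\infty\sin\theta$ produces a coefficient of order $\wt\triangle\cdot 8\log 2$ plus the contribution of $P_4\wt y$, which is not $<1/2$ when $\wt\triangle$ is near its allowed maximum $1/4$ (i.e.\ $r$ near $8/3$). So as written the argument proves at best a weaker statement (small solutions, $r$ very close to $r_+$).

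The paper's proof sidesteps the size issue entirely by working with the product $xy$ rather than with $x$ and $y$ separately. The key is Lemma~\ref{l-xy-system-1}, which shows $\frac{(\Sigma^2H)_r}{2H\Sigma^2}-\frac{\a\beta}{2H\ell}<0$ on $(0,\pi)$ \emph{for arbitrary} $x,y$ (because $\a\ge R\ell^{1/2}$ and $\ell\ge\Sigma^2$, the dangerous quantity is bounded below independently of the solution). Consequently in $(xy)_\theta$ the coefficient of $x^2$ is $\le 0$ and that of $y^2$ is $<0$, giving the scalar differential inequality $(xy)_\theta\le(-P-\tfrac{H_\theta}{2H})xy$ with $P$ bounded. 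Two integrating-factor/sign arguments (one from each pole toward $\pi/2$, then one outward using $\exp\int H_\theta/2H$) force $xy\equiv0$, after which the second equation gives $\tfrac12(x^2)_\theta=\tfrac{R_\theta}{R}x^2$ and hence $x\equiv0$, $y\equiv0$. If you want to salvage your route, you would need to replace \eqref{e-plan-x} by an estimate whose constants do not depend on $\|\wt x\|_\infty$; the mechanism of Lemma~\ref{l-xy-system-1} is precisely what supplies such size-independence, and it is most naturally exploited through $xy$, not through $\wt x$ alone.
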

 Before we prove the theorem, we need the following:
 \begin{lma}\label{l-xy-system-1}
 For $8/3>r>r_+$,  if $\e\ll 1$,  then $\displaystyle{\frac{(\Sigma^2 H)_r}{2H\Sigma^2}-\frac{\a\beta}{2H\ell} < 0}$ in $(0,\pi)$ for any $x, y$.
 \end{lma}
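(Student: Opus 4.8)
The plan is to write everything in the rescaled variables of \eqref{e-wxy-1} and to show that the claimed inequality is equivalent to a pointwise estimate that holds once $\e$ is small. Recall from the reduction leading to \eqref{e-xy-system-3} that
$\frac{(\Sigma^2H)_r}{2H\Sigma^2}=\frac1r P$ with $P=(1+\tfrac12 p+q)/(1+\e^2+p)$, and that $\frac{\a\beta}{2H\ell}=\frac{r\,\wt\a\wt\beta\,\wt\triangle^{-1/2}}{2r^2\wt H\,r^2\wt\ell}\cdot r^2 = \frac1{r}\cdot\frac{\wt\a\wt\beta}{2\wt\triangle^{1/2}\wt H\wt\ell}$ (I would recompute the exact scaling factor carefully, but the upshot is that both terms carry a common factor $1/r$). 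So the inequality to prove is equivalent to
\[
P\,\wt\triangle^{1/2}\,\wt H\,\wt\ell \;<\; \tfrac12\,\wt\a\,\wt\beta
\]
for all $\wt x,\wt y$ and all $\theta\in(0,\pi)$, where $\wt\a,\wt\beta,\wt\ell$ depend on $\wt x,\wt y$ as in \eqref{e-wxy-1}. Both sides are positive, so I would square and try to show $P^2\wt\triangle\,\wt H^2\wt\ell^2 < \tfrac14\wt\a^2\wt\beta^2$.

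First I would treat the Minkowski/Schwarzschild skeleton ($\e=0$): then $p=0$, $q=1$, $P=2$, $\wt H=s^2$, $\wt\ell=\wt y^2+1$, $\wt\a^2=\wt x^2+\wt y^2+1$, and by Lemma \ref{l-beta} with $\e=0$ one has $\wt\beta^2=4s^2(s^2+\wt y^2)$. The target becomes
$4\,\wt\triangle\, s^4(\wt y^2+1)^2 < s^2(\wt x^2+\wt y^2+1)(s^2+\wt y^2)$, i.e. $4\wt\triangle\, s^2(\wt y^2+1)^2 < (\wt x^2+\wt y^2+1)(s^2+\wt y^2)$. Here $\wt\triangle = 1-2/r < 1/4$ when $r<8/3$. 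Dropping $\wt x^2\ge0$ and using $s^2\le 1$, it suffices that $4\wt\triangle(\wt y^2+1)^2 \le (\wt y^2+1)(1+\wt y^2)=(\wt y^2+1)^2$, i.e. $\wt\triangle\le 1/4$, which is exactly the hypothesis $r<8/3$; and for the strict inequality in $(0,\pi)$ one notes $s^2>0$ there, so the term $\wt x^2(s^2+\wt y^2)>0$ unless $\wt x=0$, while if $\wt x=0$ the inequality $4\wt\triangle(\wt y^2+1)^2 < (\wt y^2+1)(1+\wt y^2)$ is still strict because $\wt\triangle < 1/4$ strictly for $r<8/3$. So the $\e=0$ case holds with room to spare.

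The remaining step is a continuity/perturbation argument in $\e$: all of $P$, $\wt\triangle$, $\wt H/s^2$, $\wt\ell$, $\wt\a^2$, and $\wt\beta^2/s^2$ are, by Lemmas \ref{l-dH-ddH} and \ref{l-beta} and the formulas in \eqref{e-wxy-1}, of the form (value at $\e=0$) $+\,E\e^2$ uniformly in $\theta$ and in $\wt x,\wt y$ on any bounded set — but here one must be careful, since the inequality is asserted for \emph{all} $\wt x,\wt y$, not a bounded set. I would handle the unboundedness by isolating the dominant behaviour: as $\wt x^2+\wt y^2\to\infty$ the right side $\wt\a^2\wt\beta^2$ grows like (a positive multiple of) $(\wt x^2+\wt y^2)\cdot\wt\beta^2$ while the left side grows like $\wt\ell^2\asymp \wt y^4$, so after dividing through by $1+\wt x^2+\wt y^2$ one gets a bounded problem plus a manifestly favourable leading term, and the $E\e^2$ perturbation of the $\e=0$ estimate (which had a strictly positive gap, uniformly once $r\le 8/3-\delta$ — I should check the gap does not degenerate as $r\to r_+$, using $\wt\triangle\to$ a value $<1/4$) is absorbed for $\e$ small. \textbf{The main obstacle} I anticipate is precisely this uniformity in the unbounded variables $\wt x,\wt y$ together with uniformity as $r\downarrow r_+$: I need the $\e=0$ inequality to hold with a gap bounded below by a fixed positive multiple of $(1+\wt x^2+\wt y^2)s^2$, so that the error terms, which are $E\e^2$ times comparable quantities, can be swallowed; verifying that the gap really has this homogeneous lower bound (rather than shrinking near $\theta=0,\pi$ or near $r=r_+$ or at large $\wt x,\wt y$) is the delicate point, and is where I would spend the bulk of the careful estimation.
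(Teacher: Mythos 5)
Your proposal is correct in substance and, at bottom, rests on the same two elementary estimates as the paper's proof, but it packages them differently, and the comparison is instructive. The paper's opening move is the chain
$\frac{\a\beta}{\ell}\ge\frac{R\beta}{\ell^{1/2}}=R\left(4H-\frac{H_\theta^2}{\ell}\right)^{1/2}\ge\frac{R}{\Sigma}\left(4H\Sigma^2-H_\theta^2\right)^{1/2}$,
which uses only $\a\ge R\ell^{1/2}$ (drop the $x^2\Sigma^2$ term) and $\ell\ge\Sigma^2$; this eliminates $x$ and $y$ in one line, after which Lemma \ref{l-beta}(ii) and the identity $(\Sigma^2H)_r=2r^3s^2\left(2(1+\e^2)-(1-\frac1r)\e^2s^2\right)$ reduce the lemma to the scalar comparison $2(1+E\e^2)<(1+E\e^2)\,\wt\triangle^{-1/2}$, i.e.\ $\wt\triangle<\frac14$ up to $O(\e^2)$, which is $r<8/3$. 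Your two steps ``drop $\wt x^2$'' and ``$s^2(\wt y^2+1)\le s^2+\wt y^2$'' are exactly these same two bounds in rescaled form, so your $\e=0$ verification is sound; but because you keep $\wt x,\wt y$ in play until the end and then invoke a perturbation in $\e$, you manufacture for yourself the uniformity problem you flag as the main obstacle. That obstacle is not actually there: the gap in your $\e=0$ computation is multiplicative rather than additive, since each factor $P/2$, $\wt H/s^2$, $\wt\ell/(\wt y^2+1)$, $\wt\a^2/(\wt x^2+\wt y^2+1)$, $\wt\beta^2/(4s^2(\wt y^2+s^2))$ equals $1+O(\e^2)$ uniformly in $\theta,\wt x,\wt y$, so the ratio of the two sides of your squared inequality is bounded by $4\wt\triangle(1+C\e^2)<1$ with no further work. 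Also, the limit $r\downarrow r_+$ is harmless (there $\wt\triangle\to0$, which only helps); the genuine non-uniformity is as $r\uparrow 8/3$, where both your argument and the paper's require $\e^2<\frac2r-\frac34$, a caveat left implicit in the statement ``$\e\ll1$''.
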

 \begin{proof}
 If $\e$ is small, then
\bee
\begin{split}
\frac{\a\beta}\ell\ge&\frac{R\beta}{\ell^\frac12}\\
=&R\lf(-\frac{H_\theta^2}{\ell}+4H\ri)^\frac12\\
\ge& R \lf(-\frac{H_\theta^2}{\Sigma^2}+4H\ri)^\frac12\\
=&\frac R\Sigma \cdot 2r^2 s^2(1+E_1\e^2).\\
\end{split}
\eee
Here and below $E_i$ will denote a quantity which is bounded by  a constant independent of $r, \theta, a$ provided $r\ge 1$. Since,
\be\label{e-HSr}
(\Sigma^2H)_r=2r^3s^2\lf(2(1+\e^2)-\lf(1-\frac1r\ri)\e^2s^2\ri),
\ee
for $\frac83>r>r_+$,  if $\e\ll 1$, then we have
$$
 1+\frac{a^2}{r^2}-\frac2r<\frac14.
 $$
 Hence
\bee
\begin{split}
\frac{(\Sigma^2 H)_r}{2H\Sigma^2}-\frac{\a\beta}{2H\ell}\le &\frac1{2H\Sigma^2}\lf(4r^3s^2(1+E_2\e^2)-2r^2 s^2 \Sigma R(1+E_1\e^2)\ri)\\
=&\frac{2r^3s^2}{2H\Sigma^2}\lf(2(1+E_2\e^2)-\frac{1+E_3\e^2}{\lf(1+\frac{a^2}{r^2}-\frac2r\ri)^\frac12}\ri)\\
<0
\end{split}
\eee
for $\theta\in(0,\pi)$.
 This completes the proof of the lemma.
 \end{proof}
 \begin{proof}[Proof of Theorem \ref{t-xy-system}]
Suppose $x, y$ are solutions to the system in $[0,\pi]$ so that $x_\theta, y_\theta$ are bounded in $[0,\pi]$. Then   $|y|\le c_1s$ for some constant $c_1$ near $\theta=0, \pi$ by the proof of Theorem \ref{t-uniqueness}.
By Lemma \ref{l-xy-system-1}, for $8/3>r>r_+$,  if $\e\ll 1$, then
     we have:
\be\label{e-xy-system-xy}
\begin{split}
(xy)_\theta
=&\displaystyle{ -\frac{(\Sigma^2H)_r}{2HR^2}x^2-\frac{\Sigma H_\theta-2H\Sigma_\theta}{2H\Sigma}xy}+\displaystyle{\frac{R_\theta}{R}xy+\lf(\frac{(\Sigma^2H)_r}{2H\Sigma^2}-\frac{\a\beta+xy H_\theta}{2H\ell}\ri)y^2}\\
\le& \lf(\frac{R_\theta}R-\frac{\Sigma H_\theta-2H\Sigma_\theta}{2H\Sigma}  - H_\theta\cdot \frac{y^2}{2H\ell}\ri)xy\\
=& -P(r,\theta) xy-\frac{H_\theta }{2H}xy
\end{split}
\ee
where in the second line we have used Lemma \ref{l-xy-system-1} and the fact that $(\Sigma^2H)_r\ge0$ by \eqref{e-HSr}. Here  $$
-P=\frac{R_\theta}R  +\frac{\Sigma_\theta}\Sigma  -\frac{y^2H_\theta}{2H\ell}
$$
which is bounded on $[0,\pi]$. Let $Q=\int_{\frac\pi2}^\theta Pd\theta$. We need to be careful here because $H_\theta/H$ is not integrable. However, $H_\theta y/H$ is integrable because $|y|\le c_1 s$ near $\theta=0, \pi$. So $Q$ is continuous on $[0,\pi]$. Hence we have
\be\label{e-xy-system-xyeQ}
(xy\exp Q)_\theta\le -\frac{H_\theta}{2H}(xy\exp Q).
\ee
Let $\Psi=xy\exp Q$. Then the above inequality is:
\bee
\Psi_\theta\le -\frac{H_\theta}{2H}\Psi.
\eee
Suppose    $\Psi(\theta_0)\le -c_{10} $ for some $c_{10}>0$ for some $\pi>\theta_0> \pi/2$. Then $ H_\theta\Psi\ge0$ on $[\theta_0,\theta_0+\delta]\subset [\pi/2,\pi)$ for some $\delta>0$ because $H_\theta(\theta_0)<0$. Hence $\Psi$ is decreasing on this interval, which implies $\Psi\le-c_{10}$ in this interval. Continuing in this way, we conclude that $\Psi(\pi)\le -c_{10}$. This is impossible because $y(\pi)=0$ which implies $\Psi(\pi)=0$. Hence we conclude that $\Psi\ge 0$ on $[\pi/2,\pi]$.

Similarly, one can prove that $\Psi\le 0$ on $[0,\pi/2]$. In particular, $\Psi(\pi/2)=0$.

On the other hand, let
$$
W(\theta)=\int_{\frac\pi2}^\theta \frac{H_\theta}{2H}d\theta
$$
which is well defined on $(0,\pi)$. Then we have
$$
(\Psi\exp W)_\theta\le 0
$$
on $(0,\pi)$. Since $\Psi\exp W=0$ at $\pi/2$, we have $\Psi\exp W\le 0$ on $[\pi/2,\pi)$. In particular, we have $\Psi\le 0$ on $[\pi/2,\pi)$. Since $\Psi\ge 0$ on $[\pi/2,\pi)$, we have $\Psi\equiv0$ on $[\pi/2,\pi)$. Similarly, one can prove that $\Psi\equiv 0$ on $(0,\pi/2]$. To summarize, we have $\Psi\equiv 0$. This implies that $xy\equiv0$.

Suppose $x$ is never zero on $(0,\pi)$, then we must have $y\equiv0$. By the first equation of \eqref{e-xy-system-2}, we conclude that $x\equiv0$. This is a contradiction.
Hence $x(\theta_0)=0$ for some $\theta_0\in(0,\pi)$. Since $xy\equiv0$, by the second equation in \eqref{e-xy-system-2}, we have
$$
\frac12(x^2)_\theta=\frac{R_\theta}R x^2,
$$
which implies that $x^2\equiv0$. By  the second equation again, we have $y\equiv0$ because $xy\equiv0$ and by Lemma \ref{l-xy-system-1}, if $\theta\in(0,\pi)$
$$
\frac{(\Sigma^2 H)_r}{2H\Sigma^2}-\frac{\a\beta}{2H\ell}<0.
$$
 This completes the proof of the theorem.
\end{proof}

\section{Appendix: derivation of \eqref{e-QLE-Kerrlike}}
Since we have the boundary expression (see \cite{SCLN:2013} (48))
\be
\begin{split}\label{QLE B_CNT}
\mathcal{B}(\partial_{T})=&-\frac{\alpha(H\Sigma^2)_{r}}{2\kappa\sqrt{H}R^{2}\Sigma^2}\\
&-\frac{\sqrt{H}}{\kappa}\left(\frac{H_{\theta\theta}-2\ell}{\beta}+\frac{R_{\theta}xy}{R\alpha}
-\frac{xy^{3}\beta+H_{\theta}\alpha\Sigma^{2}}{\ell\alpha\beta\Sigma}\Sigma_{\theta}\right)\\
&+\frac{\sqrt{H}y}{\kappa\alpha}x_{\theta}+\frac{\sqrt{H}y(H_{\theta}\alpha-xy\beta)}{\kappa\ell\alpha\beta}y_{\theta},
\end{split}
\ee
we substitute \eqref{e-xy-system-1} into \eqref{QLE B_CNT} which becomes
\bee
\begin{split}
\mathcal{B}(\partial_{T})=&-\frac{\alpha(H\Sigma^2)_{r}}{2\kappa\sqrt{H}R^{2}\Sigma^2}\nonumber\\
&-\frac{\sqrt{H}}{\kappa}\left[
\frac{H_{\theta\theta}-2\ell}{\beta}+\frac{R_{\theta}xy}{R\alpha}
-\frac{xy^{3}\Sigma_{\theta}}{\ell\alpha\Sigma}-\frac{H_{\theta}\Sigma\Sigma_{\theta}}{\ell\beta}\right.\\
&-\displaystyle{\frac{R_\theta}{R\a}xy
-\lf(\frac{(\Sigma^2H)_ry^2}{2H\Sigma^2\a}-\frac{\beta y^2}{2H\ell}-\frac{xy^3 H_\theta}{2H\ell\a}\ri)}\\
&\left.\lf(-\frac{H_{\theta}y}{\ell\beta}+\frac{xy^2}{\ell\alpha}\ri)\left(\displaystyle{ -\frac{(\Sigma^2H)_r}{2HR^2}x-\frac{\Sigma H_\theta-2H\Sigma_\theta}{2H\Sigma}y}\ri)\right]\\
=&-\frac{\alpha(H\Sigma^2)_{r}}{2\kappa\sqrt{H}R^{2}\Sigma^2}
-\frac{\sqrt{H}}{\kappa}\left[
\frac{H_{\theta\theta}-2\ell}{\beta}
-\frac{H_{\theta}\Sigma\Sigma_{\theta}}{\ell\beta}\right.\\
&+\frac{\beta y^2}{2H\ell}
+\frac{H^2_\theta y^2}{2H\ell\beta}-\frac{H_{\theta}\Sigma_\theta y^2}{\ell\beta\Sigma}\\
&\lf.\displaystyle{-\frac{(\Sigma^2H)_ry^2}{2H\Sigma^2\a}+\frac{H_{\theta}}{\ell\beta}\frac{(\Sigma^2H)_r}{2HR^2}xy-\frac{x^2y^2}{\ell\alpha}\frac{(\Sigma^2H)_r}{2HR^2}}\right],
\end{split}
\eee
where the terms with $xy^3$ are canceled. By using $\beta^2=4H\ell-H^2_{\theta}$, the boundary term becomes
\bee
\begin{split}
\mathcal{B}(\partial_{T})=&-\frac{\alpha(H\Sigma^2)_{r}}{2\kappa\sqrt{H}R^{2}\Sigma^2}\nonumber\\
&-\frac{\sqrt{H}}{\kappa}\left[
\frac{H_{\theta\theta}-2\ell+2 y^2}{\beta}
-\frac{H_{\theta}\Sigma_{\theta}(\Sigma^2+y^2)}{\ell\beta\Sigma}\right.\\
&\lf.\displaystyle{-\frac{(\Sigma^2H)_ry^2}{2H\Sigma^2\a}+\frac{H_{\theta}}{\ell\beta}\frac{(\Sigma^2H)_r}{2HR^2}xy-\frac{x^2y^2}{\ell\alpha}\frac{(\Sigma^2H)_r}{2HR^2}}\right].
\end{split}
\eee
By using $\ell=y^2+\Sigma^2$ and $\a^2=x^2\Sigma^2+R^2\ell$, it is further simplied to be
\bee
\begin{split}
\mathcal{B}(\partial_{T})=&\frac{(H\Sigma^2)_{r}}{2\kappa\sqrt{H}R^{2}\Sigma^2}\lf(-\a+\frac{R^2y^2}{\a}+\frac{x^2\Sigma^2 y^2}{\ell\a}\ri)\\
&-\frac{\sqrt{H}}{\kappa}\left[
\frac{H_{\theta\theta}-2\Sigma^2}{\beta}
-\frac{H_{\theta}\Sigma_{\theta}}{\beta\Sigma}+\frac{H_{\theta}}{\ell\beta}\frac{(\Sigma^2H)_r}{2HR^2}xy\right]\\
=&-\frac{\Sigma(H\Sigma^2)_{r}}{2\kappa\sqrt{H}R^{2}\Sigma}\lf(\frac{\a}{\ell}\ri)\nonumber\\
&-\frac{1}{2\kappa\sqrt{H}R^2\Sigma}\left[
\frac{2HR^2(\Sigma H_{\theta\theta}-2\Sigma^3-H_{\theta}\Sigma_{\theta})}{\beta}
+\frac{\Sigma H_{\theta}(\Sigma^2H)_r}{\ell\beta}xy\right]\\
=&-\lf(\frac{A}{D}\frac{\a}{\ell}+\frac{C}{D}\frac{1}{\beta}+\frac{A H_{\theta}}{D}\frac{xy}{\beta\ell}\ri).
\end{split}
\eee


\begin{thebibliography}{1000}

\bibitem{Smarr} Smarr L 1973 Surface Geometry of Charged Rotating Black Holes {\it Phys. Rev. D} \textbf{7} 289

%\cite{Brown:1992br}
\bibitem{Brown:1992br}
  Brown J D and York J W Jr 1993 Quasilocal energy and conserved charges derived from the gravitational action {\it Phys.\ Rev.\ D} {\bf 47} 1407
  (arXiv:gr-qc/9209012)
  %%CITATION = GR-QC/9209012;%%
  %801 citations counted in INSPIRE as of 23 Aug 2013

  \bibitem{do Carmo} do Carmo M 1976 {\it Differential geometry of curves and surfaces} (Prentice Hall)

\bibitem{Martinez}  Martinez E A 1994 Quasilocal energy for a Kerr black hole
{\it Phys. Rev. D} \textbf{50} 4920

 %\bibitem{ShiTam2002} Shi, Y.-G. ;  Tam, L.-F., {\it Positive mass theorem and the boundary behaviors of compact manifolds with nonnegative scalar curvature}, J. Differential Geom. \textbf{62} (2002), 79--125.

%\bibitem{MalufMartinsKneip} Maluf, J. W.; Martins, E. F. ; Kneip, A., {\it
%Gravitational energy of rotating black holes}
%J. Math. Phys.  \textbf{37}  (1996),  no. 12, 6302–6310.

%\cite{Nester2004}
%\bibitem{Nester2004}J.~M.~Nester,``{\it General pseudotensors and quasilocal quantities},''Class.\ Quant.\ Grav.\  {\bf 21} (2004) S261 doi:10.1088/0264-9381/21/3/016


%\cite{Chang:1998wj}
%\bibitem{Chang:1998wj} C.~C.~Chang, J.~M.~Nester and C.~M.~Chen, ``{\it Pseudotensors and quasilocal gravitational energy momentum},'' Phys.\ Rev.\ Lett.\  {\bf 83}, 1897 (1999) [gr-qc/9809040].
  %%CITATION = GR-QC/9809040;%%
  %148 citations counted in INSPIRE as of 23 Aug 2013


%\cite{Chen:1998aw}
\bibitem{Chen:1998aw}
  Chen C M and Nester J M 1999 Quasilocal quantities for GR and other gravity theories {\it Class.\ Quant.\ Grav.} {\bf 16} 1279 (arXiv:gr-qc/9809020)
  %%CITATION = GR-QC/9809020;%%
  %88 citations counted in INSPIRE as of 23 Aug 2013


%\cite{Chen:2005hwa} %SPIRES%
\bibitem{Chen:2005hwa}
  Chen C M, Nester J M and Tung R S 2005 The Hamiltonian boundary term and quasi-local energy flux {\it Phys.\ Rev.\  D} {\bf 72} 104020 (arXiv:gr-qc/0508026)
  %%CITATION = PHRVA,D72,104020;%%


%\cite{Wu:2011wk}
\bibitem{Wu:2011wk}
  Wu M F, Chen C M, Liu J L and Nester J M 2011 Optimal Choices of Reference for a Quasi-local Energy: Spherically Symmetric Spacetimes {\it Phys.\ Rev.\ D} {\bf 84} 084047 (arXiv:1109.4738 [gr-qc])
  %%CITATION = ARXIV:1109.4738;%%
  %3 citations counted in INSPIRE as of 06 Mar 2013



%\cite{Wu:2012mi}
\bibitem{Wu:2012mi}
  Wu M F, Chen C M, Liu J L and Nester J M 2012 Quasi-local Energy for Spherically Symmetric Spacetimes {\it Gen.\ Rel.\ Grav.} {\bf 44} 2401 (arXiv:1206.0506 [gr-qc])
  %%CITATION = ARXIV:1206.0506;%%
  %1 citations counted in INSPIRE as of 06 Mar 2013


%\cite{Frauendiener:2011rm}
%\bibitem{Frauendiener:2011rm} J.~Frauendiener and L.~B.~Szabados, ``{\it A Note on the post-Newtonian limit of quasi-local energy expressions},'' Class.\ Quant.\ Grav.\  {\bf 28} (2011) 235009 [Class.\ Quant.\ Grav.\  {\bf 29} (2012) 069501]doi:10.1088/0264-9381/28/23/235009, 10.1088/0264-9381/29/6/069501 [arXiv:1102.1867 [gr-qc]].
  %%CITATION = doi:10.1088/0264-9381/28/23/235009, 10.1088/0264-9381/29/6/069501;%%


%\cite{Szabados:2009}
%\bibitem{Szabados:2009} L.~B.~Szabados,``{\it Quasi-Local Energy-Momentum and Angular Momentum in GR: A Review Article},'' Living Rev.\ Rel.\  {\bf 12} (2009) 4.



\bibitem{WaYa2008}
  Wang M T and Yau S T 2009 Isometric Embeddings into the Minkowski
Space and New Quasi-Local Mass {\it Commun.\ Math.\ Phys.} {\bf 288} 919 (arXiv:0805.1370 [math-ph])


%\cite{Chen:2007gx}
%\bibitem{Chen:2007gx} C.~M.~Chen, J.~L.~Liu and J.~M.~Nester, ``{\it Quasi-local energy for cosmological models},'' Mod.\ Phys.\ Lett.\ A {\bf 22} (2007) 203 doi:10.1142/S0217732307025285 [arXiv:0705.1080 [gr-qc]].
  %%CITATION = doi:10.1142/S0217732307025285;%%
  %16 citations counted in INSPIRE as of 20 Dec 2015



%\cite{Liu:2011jha}
%\bibitem{Liu:2011jha} J.~L.~Liu, C.~M.~Chen and J.~M.~Nester,``{\it Quasi-local energy and the choice of reference},''  Class.\ Quant.\ Grav.\  {\bf 28} (2011) 195019 doi:10.1088/0264-9381/28/19/195019 [arXiv:1105.0502 [gr-qc]].
  %%CITATION = doi:10.1088/0264-9381/28/19/195019;%%









%\cite{Chen:2000xw}
%\bibitem{Chen:2000xw} C.~M.~Chen and J.~M.~Nester,``{\it A Symplectic Hamiltonian derivation of quasilocal energy momentum for GR},''Grav.\ Cosmol.\  {\bf 6} (2000) 257 [gr-qc/0001088].
  %%CITATION = GR-QC/0001088;%%
  %52 citations counted in INSPIRE as of 23 Aug 2013




%\cite{Chen:1994qg}
%\bibitem{Chen:1994qg} C.~M.~Chen, J.~M.~Nester and R.~S.~Tung, ``{\it Quasilocal energy momentum for gravity theories},''  Phys.\ Lett.\ A {\bf 203} (1995) 5 [gr-qc/9411048].
  %%CITATION = GR-QC/9411048;%%
  %41 citations counted in INSPIRE as of 23 Aug 2013



%\cite{Nester:1991yd}
\bibitem{Nester:1991yd}
  Nester J M 1991 A covariant Hamiltonian for gravity theories {\it Mod.\ Phys.\ Lett.\ A} {\bf 6} 2655
  %%CITATION = MPLAE,A6,2655;%%
  %25 citations counted in INSPIRE as of 23 Aug 2013


%\cite{SCLN:2013}
\bibitem{SCLN:2013}
  Sun G, Chen C M, Liu J L and Nester J M 2013 An Optimal Choice of Reference for the Quasi-Local Gravitational Energy and Angular Momentum {\it Chin.\ J.\ Phys.} {\bf 52} 111 (arXiv:1307.1039v1 [gr-qc])


%\cite{SCLN:2015}
\bibitem{SCLN:2015}
  Sun G, Chen C M, Liu J L and Nester J M 2015 A Reference for the Gravitational Hamiltonian Boundary Term {\it Chin.\ J.\ Phys.} {\bf 53} 110107-1 (arXiv:1307.1510v1 [gr-qc])

%\cite{SLN2015}
\bibitem{SLN2015}
 Sun G, Liu J L and Nester J M in preparation

%\cite{JLTH}
\bibitem{JLTH}
 Liu J L 2013 PhD thesis (unpublished)

\end{thebibliography}
\end{document}